\documentclass[a4paper,11pt,twoside]{article}
\usepackage[a4paper,left=2.2cm,right=2.2cm,top=2.3cm,bottom=2.6cm]{geometry}
\usepackage[english]{babel}
\usepackage{setspace}
\doublespacing

\usepackage{mathrsfs,hyperref, algorithm, algorithmic}
\usepackage{harvard}
\usepackage[]{amsmath}
\usepackage{amsthm}
\usepackage{fix-cm}
\usepackage[]{amssymb}
\usepackage[]{latexsym}
\usepackage[latin1]{inputenc}
\usepackage[right]{eurosym}
\usepackage[T1]{fontenc}
\usepackage[]{graphicx}
\usepackage[]{epsfig}
\usepackage{fancyhdr}
\usepackage{bbm}
\usepackage{pstricks}
\usepackage{multirow}
\usepackage{rotating}
\usepackage{array}
\usepackage{slashbox}
\makeatletter

\renewcommand{\p@enumi}{theenumi-}
\renewcommand{\@fnsymbol}[1]{\@alph{#1}}

\newcommand{\bbr}{\mathbb{R}}

\newcommand{\bbn}{\mathbb{N}}

\newcommand{\fn}[1]{\footnote{\protect\doublespacing #1}}
\newcommand{\ci}{\citeasnoun}


\newcommand{\pcal}{\mathcal{P}}


\newcommand{\ga}{\alpha}
\newcommand{\gb}{\beta}


\newcounter{modcount}
\newcommand{\modulo}[2]{%
\setcounter{modcount}{#1}\relax
\ifnum\value{modcount}<#2\relax
\else\relax
\addtocounter{modcount}{-#2}\relax
\modulo{\value{modcount}}{#2}\relax
\fi}

\newcommand{\tablepictures}[4][c]{\begin{tabular}[#1]{@{}c@{}}#2\vspace{0.5cm}\\(\alph{#4}) #3\end{tabular}}
\newcounter{gridsearch}
\newcommand{\tabpic}[2]{
    \stepcounter{gridsearch}
    \modulo{\thegridsearch}{2}
    \ifnum\value{modcount}=0
        \tablepictures[t]{#1}{#2}{gridsearch}\\[2.0cm]
    \else
        \tablepictures[t]{#1}{#2}{gridsearch}&~&
    \fi
}

\makeatother
\hyphenation{Glei-chung sto-cha-sti-sche Ge-burts-tags-kind ab-ge-ge-be-nen exi-stie-ren re-pre-sen-tation finanz-markt-aufsicht Modell-un-sicher-heit finanz-markt-risi-ken rung-gal- dier gering-sten} \arraycolsep1mm

\newtheorem{lemma}{Lemma}[section]

\newtheorem{theorem}[lemma]{Theorem}

\newtheorem{example1}[lemma]{Example}
\newtheorem{ex1}[lemma]{Example}
\newtheorem{rem1}[lemma]{Remark}
\newtheorem{assumption}[lemma]{Assumption}
\newtheorem{alg1}[lemma]{Algorithm}
\newtheorem{me1}[lemma]{Mechanism}

\newenvironment{rem}{\begin{rem1}\rm}{\end{rem1}}

\newenvironment{example}{\begin{example1}\rm}{\end{example1}}

\usepackage{color}


\newcommand{\T}{\mathsf{T}}

\DeclareMathOperator*{\argmax}{arg\,max}


\numberwithin{figure}{section}
\numberwithin{table}{section}

\begin{document}

\title{The Effects of Leverage Requirements and Fire Sales on Financial Contagion via Asset Liquidation Strategies in Financial Networks}
\author{ Zachary Feinstein\fn{Zachary Feinstein, ESE, Washington University, St. Louis, MO 63130, {\tt zfeinstein@ese.wustl.edu}.}\\[0.7ex] \textit{Washington University in St. Louis} \and Fatena El-Masri\fn{Fatena El-Masri, Federal Deposit Insurance Corporation, Arlington, VA 22203, {\tt fatenaelmasri@gmail.com}}\\[0.7ex] \textit{Federal Deposit Insurance Corporation}}
\date{\today\fn{Opinions expressed in this paper are those of the authors and not necessarily those of the FDIC. The authors are grateful to the editors and referees for their thoughtful comments and encouragements that led to this greatly improved paper.}}
\maketitle

\begin{abstract}
This paper provides a framework for modeling the financial system with multiple illiquid assets when liquidation of illiquid assets is caused by failure to meet a leverage requirement.  This extends the network model of \ci*{CFS05} which incorporates a single asset with fire sales and capital adequacy ratio.  This also extends the network model of \ci*{feinstein2015illiquid} which incorporates multiple illiquid assets with fire sales and no leverage ratios.  We prove existence of equilibrium clearing payments and liquidation prices for a known liquidation strategy when leverage requirements are required.  We also prove sufficient conditions for the existence of an equilibrium liquidation strategy with corresponding clearing payments and liquidation prices.  Finally we calibrate network models to asset and liability data for 50 banks in the United States from 2007-2014 in order to draw conclusions on systemic risk as a function of leverage requirements.
\end{abstract}\vspace{0.2cm}
\textbf{Key words:} Systemic risk; financial contagion; fire sales; leverage requirements; financial network.

\section{Introduction}

Leverage or capital adequacy requirements are regulatory rules to constrain the risk of financial institutions.  The more levered a firm, the greater the impact of an adverse shock to its balance sheet.  However, financial institutions do not exist in isolation; the actions of one firm can impact the health of other banks.  This effect is known as \emph{financial contagion}.  There are many avenues for contagion to spread including local interactions (e.g., contractual obligations) and global interactions (e.g., price impact caused by deleveraging and the mark-to-market accounting rules).  The risk to the financial system, as opposed to the risk to a specific bank, is known as \emph{systemic risk}.  In this paper, we will consider an extension of the financial contagion model of \ci*{EN01} to include multiple illiquid assets with fire sales triggered so as to have leverage ratios below the mandated maximum. This model is similar to that presented in \ci*{feinstein2015illiquid}, though with the additional allowance for fire sales caused by leverage rather than a capital shortfall.  As noted in \ci*{CFS05}, regulation that is intelligent for individual firms may actually exacerbate a systemic crisis via mark-to-market valuation from forced liquidations.  This model is further motivated by \ci*{GY14} insofar as financial contagion is weak under the model of \ci*{EN01} without asset liquidations.  Thus accurate modeling of the entire financial system is of paramount importance.

\ci*{EN01} presents a model of interbank liabilities and studies how defaults spread through the financial system due to unpaid financial obligations.  In that paper existence and uniqueness of the clearing payments are proved and algorithms for the computation of the payment vector resultant from the propagation of defaults.  
This financial contagion model has been generalized to include, e.g., bankruptcy costs (cf.\ \ci*{E07}, \ci*{RV13}, \ci*{EGJ14}, \ci*{GY14}, and \ci*{AW_15}), cross-holdings (cf.\ \ci*{E07}, \ci*{EGJ14}, and \ci*{AW_15}), fire sales for a single (representative) illiquid asset (cf.\ \ci*{CFS05}, \ci*{NYYA07}, \ci*{GK10}, \ci*{AFM13}, \ci*{CLY14}, \ci*{AW_15}, and \ci*{AFM16}), and fire sales for multiple illiquid assets (cf.\ \ci*{feinstein2015illiquid}).  \ci*{CSMF14} and \ci*{GLT15} consider network models of financial institutions in which connections are defined through common asset holdings explicitly rather than financial obligations.  Outside of a network model, \ci*{CW13} and \ci*{CW14} develop a framework for modeling correlations of asset prices during a fire sale.  Additionally, there have been empirical studies of systemic risk models in, e.g., \ci*{ELS06}, \ci*{U11}, \ci*{CMS10}, and \ci*{GY14}.

As an extension of \ci*{feinstein2015illiquid}, we study the contagion model with multiple illiquid assets where a fire sale is triggered if the leverage of a firm goes above some required level.  This is in comparison to the model presented in \ci*{CFS05} with a single illiquid asset that is liquidated if a capital adequacy requirement is violated.  We provide results on existence of clearing payments and asset prices under
\begin{enumerate}
\item known liquidation strategies and
\item an equilibrium, i.e., valuation maximizing, liquidation strategy for each financial institution.
\end{enumerate}
We additionally study how our leverage model behaves with an empirical study of financial institutions in the United States.  These case studies allow us to consider counterfactual scenarios and consider the outcome if leverage requirements were set to change instantaneously.  These numerical studies demonstrate that contagion is primarily driven by asset liquidation, as studied in \ci*{GY14}.

The organization of this paper is as follows: Section~\ref{Sec:setting} provides the financial setting and general insights that we will work within for the remainder of the paper.  Section~\ref{Sec:clearing} provides a discussion of the network clearing mechanism, and resultant interbank payments and asset prices, when the liquidation strategies are dependent only on the state of model and not the actions of other firms.  Section~\ref{Sec:games} provides a discussion of the network clearing mechanism under a game theoretic framework in which each firm is a (mark-to-market) value maximizer.  Throughout we emphasize conditions for the existence of an equilibrium solution to these clearing problems.  Section~\ref{Sec:cases} considers numerical simulations to demonstrate some insights that can be gained by our leveraged network model.  We focus on network models which we calibrate to banking data in the United States from 2007-2014.

\section{Setting} \label{Sec:setting}

Consider the financial system presented in \ci*{feinstein2015illiquid}.  That is, there are $n$ financial institutions (e.g., banks, hedge funds, or pension plans) with bilateral obligations -- commonly this is depicted as a directed network of nominal liabilities.  Further, there is a financial market with a num\'{e}raire -- a liquid (cash) asset -- and $m$ illiquid assets whose prices fluctuate depending on the numbers of units being bought or sold by the financial institutions.  For the purposes of this paper, all liabilities are to be paid in the cash asset.  We denote by $p \in \bbr^n_+$ the realized clearing payments of the banks that are attained based on the bilateral obligations.  We denote by $q \in \bbr^m_+$ the prices of the illiquid assets that are attained based on the liquidations that the financial firms are forced to enact.
Throughout this paper we will use the notation 
\begin{align*}
x \wedge y &= (\min(x_1,y_1),\min(x_2,y_2),...,\min(x_d,y_d))^\T,\\
x \vee y &= (\max(x_1,y_1),\max(x_2,y_2),...,\max(x_d,y_d))^\T
\end{align*} 
where $x,y \in \bbr^d$ for some $d \in \bbn$.

As described in \ci*{EN01}, bilateral obligations mean that any financial institution $i \in \{1,2,...,n\}$ may be a creditor to other firms, likewise institution $i$ may be an obligor to other firms as well.  Though central counterparty clearing houses (CCPs) are not within the scope of this paper, it is possible to extend the network in such a way as described in \ci*{AFM13}.  Let $\bar p_{ij} \geq 0$ be the nominal liability of firm $i$ to firm $j$.  We will impose the restriction that no firm has an obligation to itself, that is, $\bar p_{ii} = 0$ for every $i$.
The \emph{total liabilities} of firm $i$ are given by
\[\bar p_i := \sum_{j = 1}^n \bar p_{ij}.\]
To ease notation, we will define $\bar p \in \bbr^n_+$ to be the vector of total obligations for each firm in the financial system.
The fractional amount of total liabilities that firm $i$ owes to firm $j$, called the \emph{relative liabilities}, is given by
\[a_{ij} = \begin{cases}\frac{\bar{p}_{ij}}{\bar{p}_i} & \text{if } \bar{p}_i > 0 \\ \frac{1}{n} & \text{else}\end{cases}.\]
By construction the relative liabilities satisfy $\sum_{j = 1}^n a_{ij} = 1$ for any $i$.  The relative liabilities matrix is defined as $A = (a_{ij})_{i,j = 1,2,...,n}$. Note that $a_{ij}$ can be chosen arbitrarily when $\bar p_i = 0$; the choice $a_{ij} = \frac{1}{n}$ in such a case is to guarantee the summation is equal to $1$.  If sufficient marked-to-market wealth is not available then a financial institution will default on their obligations.  We accept the assumption that there is no priority of payments, as presented in \ci*{EN01}. That is, payments will be made based on the relative liabilities matrix $A$.

Along with obligations, each firm $i \in \{1,2,...,n\}$ has an initial endowment of cash $x_i \geq 0$ and some number $s_i \in \bbr^m_+$ of each illiquid asset.  The illiquid assets are measured in physical units, i.e., firm $i$ holds $s_{ik} \geq 0$ units of asset $k \in \{1,2,...,m\}$ rather than denoting by the valuation.  In this way, we can describe the vector of liquid endowments by $x \in \bbr^n_+$ and the matrix of illiquid endowments by $S = (s_{ik}) \in \bbr^{n \times m}_+$.

As mentioned above, due to forced liquidation of the illiquid assets the price of these assets may be shocked downward.  In particular this downward shock is dependent on the clearing mechanism and not external to the financial system.  We model this price by a vector valued \emph{inverse demand function} $F: \bbr^m_+ \to [0,\bar q] \subseteq \bbr^m_+$ for some maximum prices $\bar q \in \bbr^m_{++}$.  The inverse demand function $F$ maps the number of units of each illiquid asset being sold into the corresponding price per unit that would be quoted in the financial market.  Of particular interest is the component-wise inverse demand function, i.e.,
\[F(\gamma) := \left(\hat F_1(\gamma_1),\hat F_2(\gamma_2),...,\hat F_m(\gamma_m)\right)^\T \quad (\forall \gamma \in \bbr^m_+)\]
for scalar valued inverse demand functions $\hat F_k: \bbr_+ \to [0,\bar q_k]$ for every $k = 1,2,...,m$.  Mathematically, however, we will consider the more general setting in which the liquidation of one asset may potentially influence the prices of the other assets as well.  This general setting permits us to consider the case with correlated prices during a crisis as studied in \ci*{CW13} and \ci*{CW14}.  For the remainder of this paper we assume the following conditions on the inverse demand function (as was done in \ci*{feinstein2015illiquid} as well).
\begin{assumption}\label{Ass:idf}
The inverse demand function $F: \bbr^m_+ \to [0,\bar q]$ is continuous and nonincreasing.
\end{assumption}

We are now able to construct the general rules that the clearing payments and liquidations must follow.  This setting is comparable to those in \ci*{CFS05} and \ci*{feinstein2015illiquid}.  For the valuation of the illiquid assets, we will follow mark-to-market accounting.  This means the value of firm $i$'s endowments, under price vector $q \in \bbr^m_+$ is given by
\[x_i + q^\T s_i = x_i + \sum_{k = 1}^m s_{ik}q_k.\]
Beyond the endowments, each firm $i$ receives payments from firm $j$.  As described above, in case of default by firm $j$, the payments are proportional to the size of the obligation, i.e., $p_{ji} = a_{ji} p_j$ if firm $j$ pays $p_j \geq 0$ into the system as a whole.  Therefore, after taking the obligations and liabilities into account, the wealth of firm $i$ is given by
\[x_i + \sum_{k = 1}^m s_{ik}q_k + \sum_{j = 1}^n a_{ji} p_j - p_i.\]
As in \ci*{EN01}, we assume limited liability for all obligations.  That is, no firm will go into debt to pay its obligations.  Mathematically this implies that the wealth of any firm must be at least equal to $0$.  Further, we assume that each firm must pay off all its debt before it can accrue positive wealth.  Thus, under pricing vector $q \in \bbr^m_+$, we immediately can conclude that firm $i$ will pay out
\[p_i = \bar{p}_i \wedge \left(x_i + \sum_{k = 1}^m s_{ik} q_k + \sum_{j = 1}^n a_{ji} p_j\right).\]
That is, firm $i$ will pay the minimum of its total obligations ($\bar p_i$) and its marked-to-market assets ($x_i + \sum_{k = 1}^m s_{ik} q_k + \sum_{j = 1}^n a_{ji} p_j$).

At any point the leverage ratio for firm $i$ is given by
\[\lambda_i = \frac{\bar p_i - \left(t_i + \sum_{k = 1}^m \gamma_{ik} q_k\right)}{x_i + \sum_{k = 1}^m s_{ik} q_k + \sum_{j = 1}^n a_{ji} p_j - p_i}\]
where $t_i$ is the amount of the liquid asset ``sold'' and $\gamma_{ik}$ is the amount of illiquid asset $k$ sold by firm $i$.  The leverage requirement is the constraint that $\lambda_i \leq \lambda_i^{\max} \in \bbr_+$.
The numerator of the leverage ratio is given by the total debt (after liquidations).  The denominator is the equity of firm $i$ after realized clearing occurs.
We will impose two constraints on the liquidations.  First, we assume that the market does not allow for short-selling, i.e., $t_i \leq x_i + \sum_{j = 1}^n a_{ji} p_j$ and $\gamma_{ik} \leq s_{ik}$ for every firm $i$ and asset $k$.  Second, we assume that no firms within the financial network will purchase assets during a fire sale, i.e., $t_i \geq 0$ and $\gamma_{ik} \geq 0$.
Under the weak assumption that liquid capital will be used to pay off liabilities before illiquid assets are sold we can conclude
\begin{align*}
t_i &= \bar p_i \wedge \left(x_i + \sum_{j = 1}^n a_{ji} p_j\right)\\
q^\T \gamma_i &\geq \left(\bar p_i + \lambda^{\max}_i \left[p_i - \left(x_i + \sum_{k = 1}^m s_{ik} q_k + \sum_{j = 1}^n a_{ji} p_j\right)\right] - t_i\right)^+
\end{align*}

\begin{rem}
If $\lambda^{\max}_i = 0$ for all firms $i$, the model presented herein reduces to the setting of \ci*{feinstein2015illiquid}.
\end{rem}

As in \ci*{feinstein2015illiquid}, no further conclusions can be drawn without a discussion of how the firms will choose to liquidate their holdings when forced via the mechanism above.  We will consider two settings for this: when a known closed-form liquidation strategy is chosen (Section \ref{Sec:clearing}) and when a game theoretic wealth optimizing equilibrium strategy is chosen (Section \ref{Sec:games}).

\section{Clearing mechanism under known liquidation strategy} \label{Sec:clearing}

Consider the setting presented above in Section \ref{Sec:setting}.  In this section, we study the existence of the clearing payments and clearing asset prices when the liquidation strategy for each firm is known and \emph{not} dependent on the liquidation strategy chosen by the other firms.  As in \ci*{feinstein2015illiquid}, we define the \emph{liquidation function} $\gamma_{ik}: [0,\bar p] \times [0,\bar q] \to \bbr_+$ to be the number of units of asset $k \in \{1,2,...,m\}$ that firm $i \in \{1,2,...,n\}$ wishes to liquidate.  For notational simplicity, we will denote the vector of liquidations for firm $i$ by
\[\gamma_i(p,q) = \left(\gamma_{i1}(p,q),\gamma_{i2}(p,q),...,\gamma_{im}(p,q)\right)^\T \in \bbr^m_+\]
for clearing payments $p \in [0,\bar p]$ and asset prices $q \in [0,\bar q]$.  Additionally, we will denote $\gamma(p,q) \in \bbr^{n \times m}_+$ to be the matrix of asset liquidations for the financial system under clearing payments $p \in [0,\bar p]$ and asset prices $q \in [0,\bar q]$.

Recall that we assume there exists a no short-selling constraint for the financial market.  Therefore firm $i$ can only sell $s_{ik} \wedge \gamma_{ik}(p,q)$ units of asset $k$ given the clearing vector $p$ and price vector $q$.  Given the desired liquidations $s_i \wedge \gamma_i(p,q)$ of every firm $i$, the prices in the market may be updated due to the price impact, i.e., $\hat q = F\left(\sum_{i = 1}^n \left[s_i \wedge \gamma_i(p,q)\right]\right)$ is the updated market prices.

To describe the clearing mechanism with a given liquidation function $\gamma$, we use the previously given equity and pricing formulations.  The \emph{clearing mechanism} $\phi: [0,\bar p] \times [0,\bar q] \to [0,\bar p] \times [0,\bar q]$ gives the updated payment and pricing vectors.  By the logic discussed above, we define the clearing mechanism pointwise for any $(p,q) \in [0,\bar p] \times [0,\bar q]$ by
\begin{equation}\label{Eq:clearing}
\phi(p,q) := \left(\begin{array}{c}\bar{p} \wedge \left(x + S q + A^\T p\right) \\ F\left(\sum_{i = 1}^n \left[s_i \wedge \gamma_i(p,q)\right]\right)\end{array}\right).
\end{equation}
The fixed points of the clearing mechanism, i.e.,
\[(p^*,q^*) = \phi(p^*,q^*),\]
are the \emph{realized payment} or \emph{clearing vector} $p^* \in [0,\bar p]$ and implied \emph{clearing price vector} $q^* \in [0,\bar q]$ of the illiquid assets.

Because of the inverse demand function is nonincreasing, to maximize valuation, firms would want to sell the fewest assets necessary to satisfy the leverage requirements.  This is encoded in the following \emph{minimal liquidation} condition on the liquidation function $\gamma$.
\begin{assumption}\label{Ass:min-liquidation}
The liquidation function $\gamma: [0,\bar p] \times [0,\bar q] \to \bbr^{n \times m}_+$ satisfies the minimal liquidation condition: 
\begin{align}
\label{Eq:min-liquidation} q^\T\left[s_i \wedge \gamma_i(p,q)\right] &=\\
\nonumber (q^\T s_i) &\wedge \left(\left[\bar p_i - \left(x_i + \sum_{j = 1}^n a_{ji} p_j \right)\right]^+ - \lambda^{\max}_i \left[\left(x_i + \sum_{k = 1}^m s_{ik} q_k + \sum_{j = 1}^n a_{ji} p_j\right) - \bar p_i\right]^+\right)^+
\end{align}
for all firms $i \in \{1,2,...,n\}$.
\end{assumption}
As described in \ci*{feinstein2015illiquid}, Assumption~\ref{Ass:min-liquidation} states that the number of units liquidated by each firm (under clearing vector $p$ and prices $q$) either is enough to fulfill the leverage requirement or all assets are liquidated.  Additionally, there are no extraneous liquidations, i.e., no firm sells more units than is necessary to satisfy the leverage requirement.

\begin{theorem}\label{Thm:clearing-exist}
Consider a financial system $(A,\bar{p})$ with liquid endowments $x \in \bbr^n_+$ and illiquid endowments $S \in \bbr^{n \times m}_+$.  Consider liquidation function $\gamma$ and inverse demand function $F$ satisfying Assumption~\ref{Ass:idf}.
\begin{enumerate}
\item If the liquidation function $\gamma$ is continuous, then there exists a clearing payment and pricing vector $(p^*,q^*)$.
\item If the liquidation function $\gamma$ is nonincreasing, then there exists a greatest and least clearing payment vector and vector of prices, $(p^+,q^+) \geq (p^-,q^-)$.
\end{enumerate}
\end{theorem}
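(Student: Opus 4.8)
The plan is to exhibit the realized clearing payments and prices as a fixed point of the self-map $\phi$ of equation~\eqref{Eq:clearing} on the compact rectangle $K := [0,\bar p] \times [0,\bar q] \subseteq \bbr^n_+ \times \bbr^m_+$, applying Brouwer's fixed point theorem for part (i) and the Knaster--Tarski (lattice) fixed point theorem for part (ii). The first thing to check, common to both parts, is that $\phi$ maps $K$ into itself. For the payment coordinate, $\bar p \wedge (x + Sq + A^\T p) \leq \bar p$ holds trivially, and it is $\geq 0$ because $x,p,q \geq 0$ and $S, A^\T$ have nonnegative entries, so the argument of the $\wedge$ is nonnegative; hence this coordinate lies in $[0,\bar p]$. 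For the price coordinate, $\sum_i [s_i \wedge \gamma_i(p,q)] \in \bbr^m_+$ and $F$ takes values in $[0,\bar q]$ by Assumption~\ref{Ass:idf} regardless of its argument. Thus $\phi(K) \subseteq K$. (Only Assumption~\ref{Ass:idf} enters; Assumption~\ref{Ass:min-liquidation} is not needed for this theorem.)

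For part (i), note that $K$ is a nonempty compact convex subset of $\bbr^{n+m}$. The payment coordinate of $\phi$ is the componentwise minimum of the constant $\bar p$ and an affine function of $(p,q)$, hence continuous. The price coordinate is a composition of the continuous maps $\gamma$ (continuous by hypothesis), $x \mapsto s_i \wedge x$, summation over $i$, and $F$ (continuous by Assumption~\ref{Ass:idf}), hence continuous. So $\phi$ is a continuous self-map of a compact convex set, and Brouwer's fixed point theorem yields $(p^*,q^*) \in K$ with $\phi(p^*,q^*) = (p^*,q^*)$.

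For part (ii), equip $K$ with the componentwise partial order, under which it is a complete lattice (a product of compact real intervals). I would show $\phi$ is order-preserving. The payment coordinate $\bar p \wedge (x + Sq + A^\T p)$ is nondecreasing in $(p,q)$ since $S, A^\T \geq 0$ and $\wedge$ preserves monotonicity. For the price coordinate, if $(p,q) \leq (p',q')$ then $\gamma_i(p,q) \geq \gamma_i(p',q')$ because $\gamma$ is nonincreasing, so $s_i \wedge \gamma_i(p,q) \geq s_i \wedge \gamma_i(p',q')$ and $\sum_{i} [s_i \wedge \gamma_i(p,q)] \geq \sum_{i} [s_i \wedge \gamma_i(p',q')]$; since $F$ is nonincreasing (Assumption~\ref{Ass:idf}), the price coordinate of $\phi(p,q)$ is then $\leq$ that of $\phi(p',q')$. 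Hence $\phi$ is an order-preserving self-map of the complete lattice $K$, so by Knaster--Tarski its set of fixed points is itself a nonempty complete lattice; its greatest element $(p^+,q^+)$ and least element $(p^-,q^-)$ are the claimed clearing solutions and automatically satisfy $(p^+,q^+) \geq (p^-,q^-)$.

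The one step to handle with care is the monotonicity direction in part (ii): the composition of the \emph{nonincreasing} liquidation function $\gamma$ with the \emph{nonincreasing} inverse demand function $F$ produces a price-update coordinate that is \emph{nondecreasing}, which is precisely what makes $\phi$ order-preserving on the untwisted product order and thus lets Knaster--Tarski apply directly. The self-map property and the continuity claims in part (i) are routine verifications once the nonnegativity bookkeeping above is in place.
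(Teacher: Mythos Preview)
Your argument is correct and is precisely the standard route: Brouwer on the compact convex box for the continuous case, and Knaster--Tarski on the product-order complete lattice for the monotone case, with the key observation that the composition of the nonincreasing $\gamma$ with the nonincreasing $F$ yields a nondecreasing price update. The paper does not spell this out but simply cites Theorem~3.6 of \ci*{feinstein2015illiquid}, whose proof proceeds by exactly these two fixed-point theorems; your write-up is essentially that deferred argument made explicit.
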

\begin{proof}
This follows identically to Theorem~3.6 of \ci*{feinstein2015illiquid}.
\end{proof}

We will now give a few examples of liquidation functions.

\begin{example}\label{Ex:1stock-gamma}
In a market with a single (representative) illiquid asset, i.e., $m = 1$, the minimum liquidation constraint \eqref{Eq:min-liquidation} implies 
\[\gamma_i(p,q) = \frac{1}{q}\left(\left[\bar p_i - \left(x_i + \sum_{j = 1}^n a_{ji} p_j\right)\right]^+ - \lambda^{\max}_i \left[\left(x_i + q s_i + \sum_{j = 1}^n a_{ji} p_j\right) - \bar p_i\right]^+\right)^+.\]
This is akin to the single asset model presented in, e.g., \ci*{CFS05}, though is a distinct result as the leverage requirements given herein are different from the capital adequacy requirements in \ci*{CFS05}.
\end{example}

\begin{example}\label{Ex:proportional-gamma}
In a market with multiple illiquid assets, firms may choose to sell their assets in proportion to their holdings, i.e., for some agent $i = 1,2,...,n$ and any asset $k = 1,2,...,m$
\[\begin{split}\gamma_{ik}(p,q) &=\\
&\frac{s_{ik}}{\sum_{l = 1}^m s_{il}q_l}\left(\left[\bar p_i - \left(x_i + \sum_{j = 1}^n a_{ji} p_j\right)\right]^+ - \lambda^{\max}_i \left[\left(x_i + \sum_{k = 1}^m s_{ik} q_k + \sum_{j = 1}^n a_{ji} p_j\right) - \bar p_i\right]^+\right)^+.
\end{split}\]
This is the leveraged extension of Example~3.3 in \ci*{feinstein2015illiquid}.  The specified number of units to be sold is equal to the fraction (leverage shortfall divided by the marked-to-market asset valuation) of the total holdings $s_{ik}$.
\end{example}

\begin{example}\label{Ex:price-gamma-best}
To decrease the total number of assets to be liquidated, firms may choose to sell the assets with greatest value first.  Notationally, let $[k] \in \{1,2,...,m\}$ indicate the index of the $k^\text{th}$ highest price, i.e., $q_{[1]} \geq q_{[2]} \geq ... \geq q_{[m]}$.  The number of units of asset $[k]$ liquidated by firm $i$ is given by 
\[\begin{split}\gamma_{i[k]}&(p,q) =\\
&\frac{1}{q_{[k]}}\left(\left[\bar p_i - \left(x_i + \sum_{j = 1}^n a_{ji} p_j\right)\right]^+ - \lambda^{\max}_i \left[\left(x_i + \sum_{l = 1}^m s_{il} q_l + \sum_{j = 1}^n a_{ji} p_j\right) - \bar p_i\right]^+ - \sum_{l = 1}^{k-1} s_{i[l]} q_{[l]}\right)^+.
\end{split}\]
Notice that, in the case that a firm is able to cover the leveraged shortfall through the priciest $k-1$ assets alone, then no units of assets $l = k,k+1,...,m$ would be sold.
However, we may not be able to guarantee the existence of a solution under this liquidation strategy.  Instead we propose an strategy that approximates the behavior described above.  For any $\epsilon > 0$ define a function $h_\epsilon: \bbr_+ \to [0,1]$ continuous and strictly decreasing.  Additionally define $h_\epsilon$ so that
\[\lim_{\epsilon \to 0} h_\epsilon(z) = \begin{cases} 1 & \text{if } z = 0\\ 0 &\text{else}\end{cases}.\]  
As described we can define the liquidation strategy via the following system of equations.
\begin{align}
\label{Eq:price-gamma-best}
\gamma_{i[k]}^\epsilon(p,q) &= \sum_{\hat k = 1}^k \frac{\bar h_\epsilon(\hat k,k) s_{i[k]}}{\sum_{l = \hat k}^m \bar h_\epsilon(\hat k,l) s_{i[l]} q_{[l]}} g_\epsilon(\hat k)^+
\end{align}
where $\bar h_\epsilon$ and $g_\epsilon$ are defined as
\begin{align*}
\bar h_\epsilon(k,l) &:= h_\epsilon(q_{[k]} - q_{[l]}) \left[1 - \sum_{\ga = 1}^{k - 1} h_\epsilon(q_{[\ga]} - q_{[l]}) \prod_{\gb = 1}^{\ga - 1} \left[1 - h_\epsilon(q_{[\gb]} - q_{[l]})\right]\right]\\
g_\epsilon(k) &:= \begin{cases}\left[\bar p_i - \left(x_i + \sum_{j = 1}^n a_{ji} p_j\right)\right]^+ &\text{if } k = 0\\
    g_\epsilon(0) - \lambda^{\max}_i \left[\left(x_i + \sum_{l = 1}^m s_{il} q_l + \sum_{j = 1}^n a_{ji} p_j\right) - \bar p_i \right]^+ & \text{if } k = 1\\ 
    g_\epsilon(k-1) - \left[\frac{g_\epsilon(k-1)^+}{\sum_{l = k-1}^m \bar h_\epsilon(k-1,l) s_{i[l]} q_{[l]}} \wedge 1\right] \sum_{l = k-1}^m \bar h_\epsilon(k-1,l) s_{i[l]} q_{[l]} & \text{if } k \in \{2,...,m\}\end{cases}.
\end{align*}
One possible choice for the smoothing function $h_\epsilon$ is given by $h_\epsilon(z) = \exp(-\frac{1}{\epsilon}z)$.
This approximation strategy does a sequence of weighted proportional liquidations (see, e.g., Example~\ref{Ex:proportional-gamma}) with weights increasing with the price of assets.
\end{example}

\begin{example}\label{Ex:price-gamma-worst}
In contrast to Example~\ref{Ex:price-gamma-best}, a firm may wish to sell its lowest value assets first so that they will no longer be on its balance sheet.  Consider the same notation as above with $[k]$ denoting the index of the $k^\text{th}$ highest price.  The number of units of asset $[k]$ sold by firm $i$ is given by
\[\begin{split}&\gamma_{i[k]}(p,q) =\\
&\quad\frac{1}{q_{[k]}}\left(\left[\bar p_i - \left(x_i + \sum_{j = 1}^n a_{ji} p_j\right)\right]^+ - \lambda^{\max}_i \left[\left(x_i + \sum_{l = 1}^m s_{il} q_l + \sum_{j = 1}^n a_{ji} p_j\right) - \bar p_i\right]^+ - \sum_{l = k+1}^m s_{i[l]} q_{[l]}\right)^+.
\end{split}\]
As in Example \ref{Ex:price-gamma-best}, we may not be able to guarantee the existence of a solution under this liquidation strategy.  Define the net of functions $h_\epsilon: \bbr_+ \to [0,1]$ as above.  We can define an approximate liquidation strategy via the system of equations below.
\begin{align}
\label{Eq:price-gamma-worst}
\gamma_{i[k]}^\epsilon(p,q) &= \sum_{\hat k = k}^m \frac{\bar h_\epsilon(\hat k,k) s_{i[k]}}{\sum_{l = 1}^{\hat k} \bar h_\epsilon(\hat k,l) s_{i[l]} q_{[l]}} g_\epsilon(\hat k)^+\\
\nonumber \bar h_\epsilon(k,l) &:= h_\epsilon(q_{[l]} - q_{[k]}) \left[1 - \sum_{\ga = k+1}^m h_\epsilon(q_{[l]} - q_{[\ga]}) \prod_{\gb = \ga+1}^m \left[1 - h_\epsilon(q_{[l]} - q_{[\gb]})\right]\right]\\
\nonumber g_\epsilon(k) &:= \begin{cases}g_\epsilon(k+1) - \left[\frac{g_\epsilon(k+1)^+}{\sum_{l = 1}^{k+1} \bar h_\epsilon(k+1,l) s_{i[l]} q_{[l]}} \wedge 1\right] \sum_{l = 1}^{k+1} \bar h_\epsilon(k+1,l) s_{i[l]} q_{[l]} &\text{if } k \in \{1,...,m-1\}\\
    g_\epsilon(m+1) - \lambda^{\max}_i \left[\left(x_i + \sum_{l = 1}^m s_{il} q_l + \sum_{j = 1}^n a_{ji} p_j\right) - \bar p_i \right]^+ & \text{if } k = m\\
    \left[\bar p_i - \left(x_i + \sum_{j = 1}^n a_{ji} p_j\right)\right]^+ & \text{if } k = m+1\end{cases}. 
\end{align}
As with the approximate strategy in Example~\ref{Ex:price-gamma-best}, this liquidation strategy considers a sequence of weighted proportional liquidations with weights decreasing as the price of assets increase.
\end{example}

\begin{rem}
Examples~\ref{Ex:1stock-gamma} and \ref{Ex:proportional-gamma} present liquidation functions that are continuous and nonincreasing; Examples~\ref{Ex:price-gamma-best} and \ref{Ex:price-gamma-worst} present liquidation functions that are neither continuous nor nonincreasing.  The liquidation functions provided by Equations \eqref{Eq:price-gamma-best} and \eqref{Eq:price-gamma-worst} provide continuous approximations that converge to the respective examples as $\epsilon$ tends to $0$.
\end{rem}

\begin{rem}
Under the conditions of Theorem~\ref{Thm:clearing-exist}(ii), we can utilize the modified \emph{fictitious default algorithm} presented in \ci*{feinstein2015illiquid} to compute the greatest clearing payments and prices $(p^+,q^+)$.  Under the conditions of Theorem~\ref{Thm:clearing-exist}(i), we can search for the clearing payments and prices $(p^*,q^*)$ via fixed point iterations beginning from $(p^0,q^0) = (\bar p,\bar q)$.  However, the fixed point iterations may not converge as this is not a contraction mapping.  Alternative techniques for finding fixed points of continuous mappings, such as Scarf's algorithm (see, e.g., \ci*{scarf67}), can be employed instead.
\end{rem}

\section{Equilibrium liquidation strategies}\label{Sec:games}

Consider again the setting presented in Section~\ref{Sec:setting}.  We now wish to study the existence of the clearing payments and clearing prices when the liquidation strategies of each firm depend on the strategies chosen by all other firms.  That is, we wish to consider a game theoretic strategy.  And in particular we consider the case that all firms are value maximizers, i.e., firm $i$ wishes to maximize the valuation of its assets $s_i$.  Because the asset prices are influenced by the actions of all firms, this is an equilibrium liquidation strategy.  This is a more realistic scenario since financial institutions do work as value maximizers, and do not publicize their trading strategies beforehand.  Note that this is akin to the equilibrium liquidation strategy as presented in \ci*{feinstein2015illiquid}.

\begin{rem}
Though the problem is written as a function of the liquidations enacted by individual banks, the solution only depends on the aggregate liquidations of all other banks.  Notably, this solution does not require that a firm know who is selling, but only the total amount being sold within the financial system.
\end{rem}

Recall that the equity or loss of firm $i$ can be given by the form
\[x_i + \sum_{k = 1}^m s_{ik} q_k^* + \sum_{j = 1}^n a_{ji} p_j^* - \bar p_i\]  
for some clearing payment and price $(p^*,q^*)$.
Each firm, however, is given the choice on how to liquidate its own assets $s_i$, which can affect the prices of the illiquid assets $q^*$.  That is, every firm $i \in \{1,2,...,n\}$ can choose how many units $\gamma_{ik} \geq 0$ of asset $k \in \{1,2,...,m\}$ to liquidate so as to maximize its own equity.  In particular, the liquidation strategy for firm $i$ will depend on the liquidation strategies $\gamma_{-i}^*$ for all other firms.  In sum, firm $i$ will choose to liquidate in order to solve the maximization problem
\begin{align}\label{Eq:ind-opt}
\gamma_i(p,q,\gamma_{-i}^*) &\in \argmax_{g_i \in \Gamma_i(p,q)} s_i^\T F\left([s_i \wedge g_i] + \sum_{j \neq i} [s_j \wedge \gamma_j^*]\right)\\
\nonumber \Gamma_i(p,q) &= \left\{\gamma_i \in \bbr^m_+ \; | \; q^\T [s_i \wedge \gamma_i] = \left(q^\T s_i\right) \wedge \Lambda_i(p,q) \right\}\\
\nonumber \Lambda_i(p,q) &= \left(\left[\bar p_i - \left(x_i + \sum_{j = 1}^n a_{ji} p_j \right)\right]^+ - \lambda^{\max}_i \left[\left(x_i + \sum_{k = 1}^m s_{ik} q_k + \sum_{j = 1}^n a_{ji} p_j\right) - \bar p_i\right]^+\right)^+.
\end{align}
The formula provided by Equation~\eqref{Eq:ind-opt} identifies the number of units of each asset a firm should liquidate to maximize its own mark-to-market valuation.  Assumption~\ref{Ass:min-liquidation} restricts the allowable liquidations at pricing vector $q$, described by $\Gamma_i$ for each firm $i$.  Assuming all other firms follow the strategy $\gamma_{-i}^*$, then should firm $i$ act differently, it would be missing a potentially higher valuation.

In such a setting the clearing mechanism is modified to find the clearing payment $p^* \in [0,\bar p]$, clearing price $q^* \in [0,\bar q]$, and \emph{equilibrium liquidation strategy} $\gamma^* \in \bbr^{n \times m}_+$.  The modified clearing mechanism $\Psi: [0,\bar p] \times [0,\bar q] \times \bbr^{n \times m}_+ \to \pcal([0,\bar p] \times [0,\bar q] \times \bbr^{n \times m}_+)$ (where $\pcal$ denotes the power set) is defined by
\begin{align}\label{Eq:liquidation}
\begin{split}
\Psi(p,q,\gamma) := &\left\{\bar p \wedge (x + Sq + A^\T p)\right\} \times \left\{F\left(\sum_{i = 1}^n [s_i \wedge \gamma_i]\right)\right\} \\ 
& \quad \times \prod_{i = 1}^n \argmax_{g_i \in \Gamma_i(p,q)} s_i^\T F\left([s_i \wedge g_i] + \sum_{j \neq i} [s_j \wedge \gamma_j]\right).
\end{split}
\end{align}
A fixed point of the clearing mechanism, $(p^*,q^*,\gamma^*) \in \Psi(p^*,q^*,\gamma^*)$, is the joint clearing payment, price, and liquidation strategies.
Note that each firm satisfies the minimum liquidation condition described in Assumption~\ref{Ass:min-liquidation} because $q^* = F(\sum_{i = 1}^n [s_i \wedge \gamma_i^*])$ by definition.  Also note that no firm can unilaterally increase its own valuation by trading in a contrary manner to that given by $\gamma^*$.  In Theorem~\ref{Thm:equil-liquidation}, we give conditions for the existence of the joint clearing payments, prices, and liquidation strategies under the modified clearing mechanism $\Psi$.

\begin{theorem}\label{Thm:equil-liquidation}
Consider a financial system $(A,\bar{p})$ with liquid endowments $x \in \bbr^n_+$ and illiquid endowments $S \in \bbr^{n \times m}_+$.  Consider some inverse demand function $F$ satisfying Assumption~\ref{Ass:idf} such that $F(\sum_{i = 1}^n s_i) \in \bbr^m_{++}$ and $\beta \mapsto s_i^\T F(\beta)$ is quasi-concave for all $i = 1,2,...,n$.  There exists a combined clearing payment, clearing price, and equilibrium liquidation strategy, i.e., there exists $(p^*,q^*,\gamma^*) \in \Psi(p^*,q^*,\gamma^*)$.
\end{theorem}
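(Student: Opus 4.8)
The plan is to realize the correspondence $\Psi$ of \eqref{Eq:liquidation} as a correspondence from a compact convex set into its own power set and then apply the Kakutani--Fan--Glicksberg fixed point theorem, mirroring the equilibrium argument of \ci*{feinstein2015illiquid}. The first task is to compactify the domain. Since each $\gamma_i$ enters \eqref{Eq:liquidation} only through $s_i \wedge \gamma_i$, I would restrict attention to liquidations with $0 \le \gamma_i \le s_i$ componentwise, on which $s_i \wedge \gamma_i = \gamma_i$. For such $\gamma$ one has $0 \le \sum_{i=1}^n \gamma_i \le \sum_{i=1}^n s_i$, so Assumption~\ref{Ass:idf} together with the hypothesis $F(\sum_{i=1}^n s_i) \in \bbr^m_{++}$ gives $F(\sum_{i=1}^n \gamma_i) \in [\underline q,\bar q]$ with $\underline q := F(\sum_{i=1}^n s_i) \in \bbr^m_{++}$. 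Hence the restricted clearing correspondence maps the compact convex set $D := [0,\bar p] \times [\underline q,\bar q] \times \prod_{i=1}^n [0,s_i]$ into its power set: the $p$-block lies in $[0,\bar p]$ since $x + Sq + A^\T p \ge 0$, the $q$-block in $[\underline q,\bar q]$ by the above, and the liquidation block in $\prod_i[0,s_i]$ after the reduction; the constraint sets become $\widetilde\Gamma_i(p,q) := \{g_i \in [0,s_i] : q^\T g_i = (q^\T s_i)\wedge\Lambda_i(p,q)\}$ (note $q^\T[s_i\wedge g_i] = q^\T g_i$ when $g_i \le s_i$).

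Next I would verify the Kakutani hypotheses on $D$. Nonemptiness of the values reduces to nonemptiness of each $\argmax_{g_i \in \widetilde\Gamma_i(p,q)} s_i^\T F(g_i + \sum_{j\ne i}\gamma_j)$: the set $\widetilde\Gamma_i(p,q)$ is nonempty because $g_i \mapsto q^\T g_i$ maps the connected box $[0,s_i]$ onto $[0,q^\T s_i]$, which contains $(q^\T s_i)\wedge\Lambda_i(p,q)$ (the latter being nonnegative and at most $q^\T s_i$); it is closed and bounded, hence compact; and the objective is continuous by Assumption~\ref{Ass:idf}, so Weierstrass applies. For convexity of the values, the $p$- and $q$-blocks are singletons, and for the liquidation block the objective $g_i \mapsto s_i^\T F(g_i + \sum_{j\ne i}\gamma_j)$ is quasi-concave (a quasi-concave function precomposed with an affine map, using the hypothesis that $\beta \mapsto s_i^\T F(\beta)$ is quasi-concave) while $\widetilde\Gamma_i(p,q)$ is convex (a box intersected with a hyperplane), so the maximizer set is convex.

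The remaining, and most delicate, point is the closed-graph (upper hemicontinuity) property. The $p$-block and $q$-block are single-valued continuous maps of $(p,q,\gamma)$ --- the latter because $F$ is continuous --- so they have closed graph. For the liquidation block I would invoke Berge's maximum theorem firm by firm: the objective is jointly continuous, so the argmax correspondence is upper hemicontinuous with nonempty compact values provided the constraint correspondence $(p,q) \mapsto \widetilde\Gamma_i(p,q)$ is continuous. Upper hemicontinuity of $\widetilde\Gamma_i$ is routine from continuity of $(p,q) \mapsto (q^\T s_i)\wedge\Lambda_i(p,q)$ and compactness of $[0,s_i]$; the crux --- which I expect to be the main obstacle --- is lower hemicontinuity of the equality-constrained set $\widetilde\Gamma_i$, and this is precisely where $F(\sum_{i=1}^n s_i) \in \bbr^m_{++}$ is used: because $q$ ranges over $[\underline q,\bar q]$ with $\underline q \in \bbr^m_{++}$, the normal $q$ of the budget hyperplane has entries bounded uniformly away from $0$, so (for $s_i \ne 0$; the case $s_i = 0$ being trivial) $q^\T s_i$ is bounded below by a positive constant and the feasible section of the box can be perturbed by an amount controlled by $|q_k - q_0| + |\Lambda_i(p_k,q_k)\wedge(q_k^\T s_i) - \Lambda_i(p_0,q_0)\wedge(q_0^\T s_i)|$ to track any prescribed limiting feasible point; I would carry this out by adjusting a coordinate $\ell$ with $s_{i\ell}>0$ along which the candidate point is not pinned at a box face.

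Finally, Kakutani--Fan--Glicksberg yields a fixed point $(p^*,q^*,\gamma^*) \in D$ of the restricted correspondence, so $p^* = \bar p \wedge (x + Sq^* + A^\T p^*)$, $q^* = F(\sum_{i=1}^n \gamma_i^*) = F(\sum_{i=1}^n[s_i\wedge\gamma_i^*])$ since $\gamma_i^* \le s_i$, and $\gamma_i^* \in \argmax_{g_i \in \widetilde\Gamma_i(p^*,q^*)} s_i^\T F(g_i + \sum_{j\ne i}[s_j\wedge\gamma_j^*])$ for every $i$. To conclude $(p^*,q^*,\gamma^*) \in \Psi(p^*,q^*,\gamma^*)$ for the original $\Psi$ of \eqref{Eq:liquidation}, I would observe that any feasible $g_i \in \Gamma_i(p^*,q^*)$ with $g_i \in \bbr^m_+$ unrestricted has $s_i \wedge g_i \in \widetilde\Gamma_i(p^*,q^*)$ with the same objective value, so maximality of $\gamma_i^*$ over $\widetilde\Gamma_i(p^*,q^*)$ already implies maximality over the unrestricted $\Gamma_i(p^*,q^*)$; hence the reductions made in Step~1 were harmless and the fixed point of the restricted correspondence is a fixed point of $\Psi$.
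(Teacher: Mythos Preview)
Your proposal is correct and follows the same route as the paper, which proves the theorem by citing Theorem~4.1 of \ci*{feinstein2015illiquid}; you have unpacked that citation into the underlying Kakutani--Fan--Glicksberg argument, and your identification of lower hemicontinuity of the budget correspondence $\widetilde\Gamma_i$ as the place where the hypothesis $F(\sum_i s_i)\in\bbr^m_{++}$ enters is exactly right. The only comment is that your one-coordinate adjustment for lower hemicontinuity needs a slight refinement when the target point $g$ sits on a face of the box $[0,s_i]$ (one may have to move along $s_i - g$ or along $-g$ depending on the sign of the residual), but this is routine and your sketch already contains the essential ingredients.
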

\begin{proof}
This follows identically to Theorem~4.1 of \ci*{feinstein2015illiquid}.
\end{proof}

Financially, if the inverse demand function is concave -- combined with Assumption~\ref{Ass:idf} -- then the velocity of price impact increases as more assets are liquidated.  In financial markets this would be the case if the limit order book is dense near the initial market price but with a long shallow tail at lower prices.  That is, the underlying liquidity in the market dries up as more assets are sold leading to significant asset price movements.

\begin{rem}
The clearing liquidation strategy $\gamma^*$ is a Nash equilibrium for the financial system if all firms are value maximizers.  That is, no firm $i$ has an incentive to change strategies from $\gamma_i^*$ since it is a maximizer of the valuation problem (given the full market strategy $\gamma^*$).
\end{rem}

\begin{rem}
Under the conditions of Theorem~\ref{Thm:equil-liquidation} and if $\beta \mapsto s_i^\T F(\beta)$ is strictly quasi-concave, then $\Psi$ is a singleton.  Thus, if there exists a limit to some fixed point iterations of $\Psi$ then this limit is a joint clearing payment, price, and liquidation strategy.  For computation we will run fixed point iterations beginning from $(p^0,q^0,\gamma^0) = (\bar p,\bar q,0)$.  In the case that no fixed point could be found due to non-convergence of the fixed point iterations we will interpolate linearly between clearing solutions for known network parameters; this effect will be seen in Example~\ref{Ex:multiasset} in the next section.
\end{rem}

\section{Numerical case studies}\label{Sec:cases}

In this section we implement the proposed leveraged financial contagion model.  The underlying asset and liability data is a subsection of US banking data discussed in Appendix~\ref{Sec:data}.  With this data we have choices for the structure of the financial network which impacts the risk levels:
\begin{itemize}
\item the value of assets that are from interbank liabilities of other firms ($\bar p_{\cdot i}$ for firm $i$);
\item the number of illiquid assets ($m$) and the composition of the initial portfolio of liquid ($x_i$ for firm $i$) and illiquid ($s_i$ for firm $i$) holdings;
\item the liquidation strategy ($\gamma_i$ for firm $i$) of illiquid assets;
\item the network structure and distribution of liabilities ($\bar p_{i \cdot}$ for firm $i$) within and to outside the system; and
\item the leverage requirement ratio.
\end{itemize}

To construct the financial network it is sometimes beneficial to consider the case that a fraction of the liabilities of the financial firms leave the system or come in from outside of the financial system.  To accommodate this we consider an augmented system with an additional ``firm'', denoted by node $0$.  Without loss of generality, as discussed in \ci*{feinstein2014measures}, we assume the outside node $0$ will never default on its obligations -- and more specifically we will construct a system in which node $0$ has no obligations to the rest of system.  When desired, in order to incorporate the possibility of node $0$ not paying its obligations in full, we will stress the assets for the different firms at the start of the clearing mechanism.

To calibrate the network we first consider some (possibly random) structure to the directed links.  Notationally, let a link from $i$ to $j$ have (possibly) positive weight if $j \in I(i)$.  With a given maximum percentage of the assets for each firm coming from incoming liabilities -- with nominal size of incoming obligations $P_i^\text{in}$ -- we run the linear program~\eqref{Eq:lin-prog_network} to find the weights for each arc in the network.  The solution to this optimization problem is the matrix of obligations (including the outside node $0$) so that the minimal amount of liabilities is passed out of the initial financial system and the network structure is retained.
\begin{equation}\label{Eq:lin-prog_network}
\min_{L \in \bbr^{(n+1) \times (n+1)}_+} \sum_{j = 1}^n L_{j0} \quad \text{s.t.} \quad \sum_{j = 1}^n L_{ji} \leq P_i^\text{in}, \; \sum_{j = 1}^n L_{ij} = \bar p_i, \; L_{ii},L_{0i},L_{ij_i} = 0 \quad \forall i,\forall j_i \in I(i)^c
\end{equation}
Though we only present this calibration mechanism, the qualitative results presented herein appear to hold generally.  That is, the results appear robust to network topology; this is in line with the results of \ci*{GY14}. 

Throughout the examples we will consider three proxies of systemic risk where appropriate:
\begin{itemize}
\item The fraction of total firms that are defaulting on a portion of their obligations.  This was studied previously in, e.g., \ci*{lehar05} and \ci*{zhou10}.
\item The fraction of total firms that are violating the maximum leverage ratio requirement.  This can equivalently be given (except on space of Lesbegue measure 0) by the firms that are liquidating all of their assets.  Note that this value, in theory, can be independent of a firm defaulting on its obligations.
\item The fraction of payments to the economy outside of the financial network to the total amount owed.  Mathematically this can be given by the formula
\[\frac{\sum_{j = 1}^n a_{j0} p_j}{\sum_{j = 1}^n a_{j0} \bar p_j}.\]
This was studied previously in \ci*{feinstein2014measures} for use as an aggregation of the health of the financial network in systemic risk measures. 
\end{itemize}

In the construction of the portfolio holdings, we consider two measures of diversification.  First the percentage of initial portfolio holdings in the liquid asset (as opposed to invested in the illiquid assets more generally).  Second is a value of how much is invested in each illiquid asset. 
Consider the case in which firm $i$ has initial capital $C_i$ in its portfolio, invests $\alpha \in [0,1]$ fraction in the liquid asset, and assume random distribution of investments in the illiquid assets defined by some standard deviation $\sigma \geq 0$.  The value of $\sigma$ gives a parameter for the diversification.  Under this scheme, firm $i$ will hold $x_i = \alpha C_i$ in the liquid asset.  To construct the illiquid holdings, let
\[c_i^k \sim N(1,\sigma^2)^+ \quad (\forall k = 1,2,...,m)\]
where $N^+$ denotes the maximum of a normal distribution and $0$, then 
\[s_{ik} = \begin{cases}(1-\alpha)\frac{c_i^k}{\sum_{l = 1}^m \bar q_l c_i^l} C_i & \text{if } \sum_{l = 1}^m c_i^l > 0\\ \frac{1-\alpha}{\sum_{l = 1}^m \bar q_l}C_i &\text{else}\end{cases}.\]
under the assumption that $F(0) = \bar q \in \bbr^m_{++}$.  Note that when $\sigma = 0$ this corresponds to each firm dividing its initial illiquid holdings equally amongst the assets, i.e., perfect diversification.  As $\sigma$ increases, the disparity between assets increases and (probabilistically) each firm increases the number of assets in which it is not invested.  Further note that if $\alpha = 1$, this scheme reduces to the contagion model presented in \ci*{EN01} as no internal control remains for lowering a firm's leverage ratio.

We will consider a few examples with the US banking data.  First, in Example~\ref{Ex:1asset-portfolio}, we study the case in which there is only a single, representative, illiquid asset and vary the portfolio composition between the liquid and illiquid asset.  In Example~\ref{Ex:1asset-network}, we continue to study the case with a single illiquid asset, but now vary the composition of the network -- varying the probability of a connection being formed and varying the value of the assets from interbank liabilities.  In Example~\ref{Ex:multiasset}, we study the implications of different diversification (defined by the parameter $\sigma$) and liquidation strategies have on the risk of the system by incorporating multiple illiquid assets. Finally, in Example~\ref{Ex:leverage}, we consider a counterfactual scenario in which we shock the liquid and illiquid holding values after allowing for firms to cancel some assets and liabilities to meet the leverage requirement but before the clearing mechanism is considered.

\begin{example}\label{Ex:1asset-portfolio}
Consider a market with a single (representative) illiquid asset as well as the liquid asset.  Assume a network with random connections; the probability of a link from node $i$ to $j$ (for $i \neq j$) is $25\%$ and independently sampled for each potential link.  Further, let at most $10\%$ of any firm's assets be from incoming liabilities from other firms.  As there is only a single asset the liquidation strategy is uniquely defined in Example~\ref{Ex:1stock-gamma}.  Let the inverse demand function be given by
\[F(z) = \begin{cases} 1 - \frac{2z}{3 \times 10^8}& \text{if } z \leq 5 \times 10^7 \\ \frac{10^4\sqrt{2}}{3\sqrt{z}}& \text{if } z \geq 5 \times 10^7 \end{cases}.\]

When the entire portfolio is in the liquid asset ($\alpha = 1$) then under this setup the network is always stable -- no defaults, no leverage shortfalls, all payments made in full as anticipated by the results of \ci*{GY14}.  If however the entire portfolio is in the illiquid asset ($\alpha = 0$) then there is a minimal leverage requirement for which any lower maximum leverage leads to all firms defaulting and failing the leverage requirement and any higher maximum leverage leads to no firms defaulting on any obligations.  This is evidenced in, e.g., Figure~\ref{Fig:1asset_portfolio_2007}; note that the fraction of firms defaulting and the fraction that violate the leverage requirement coincide exactly.  In Table~\ref{Table:1asset_portfolio_switch} the values of this minimal acceptable leverage requirement for each year are given.  From this table it is clear that there was a crisis that occurred in 2008; the minimum possible leverage requirement so that the system is stable is $19.8325$ which is much higher than the values in every other year.  In the years after the 2008 financial crisis the system became more stable until 2012 when the minimal safe leverage ratio started to slowly rise again.  This would be an interesting metric to follow going forward as it is a simple measure and is reflected in all three proxies for systemic stability.
\begin{table} \label{Table:1asset_portfolio_switch}
\centering
\begin{tabular}{|c|*{8}{c}|}
\hline
Year & 2007 & 2008 & 2009 & 2010 & 2011 & 2012 & 2013 & 2014 \\ \hline
Leverage & 14.6325 & 19.8325 & 14.2700 & 12.9125 & 12.0375 & 11.8250 & 12.6700 & 13.0850\\ \hline
\end{tabular}
\caption{Example~\ref{Ex:1asset-portfolio}: Minimal acceptable leverage requirements from 2007-2014.}
\end{table}

\begin{figure}
\centering
\includegraphics[height=0.4\textheight,width=0.8\textwidth]{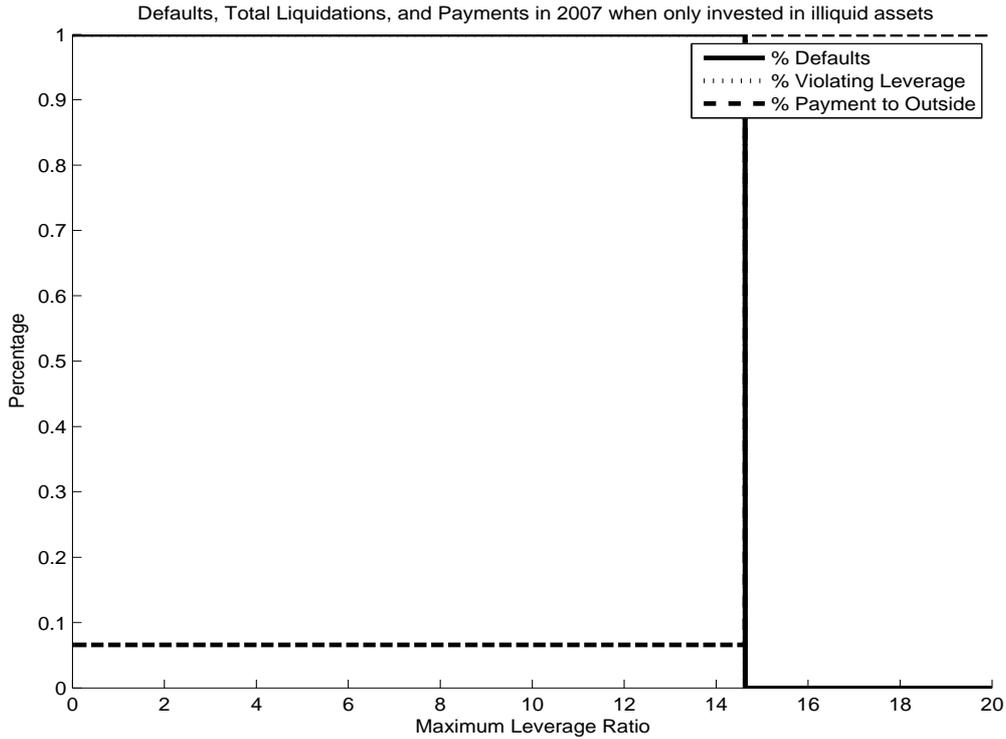}
\caption{Example~\ref{Ex:1asset-portfolio} for 2007: Fraction of firms that default on a portion of their debt, fraction of firms that have a terminal leverage shortfall, and fraction of liabilities paid to outside the financial system when all portfolio investments are made in the illiquid asset.}
\label{Fig:1asset_portfolio_2007}
\end{figure}

Figure~\ref{Fig:1asset_alpha_2007} shows the general shape of what happens as the fraction of portfolio holdings $\alpha$ in the liquid asset (as opposed to the illiquid asset) increases from $0$ to $1$.  The result here is exactly as anticipated.  The larger the holdings in the cash asset the more resilient the financial system since fire sales and price fluctuations have a lower impact on a firm's overall valuation and equity.  Figure~\ref{Fig:1asset_alpha_minimum_2007} provides a plot of the minimal acceptable leverage requirements in 2007 as a function of the fraction of portfolio holdings $\alpha$ in the liquid asset.  Notably, the more assets are held in the liquid asset, the more resilient the system becomes to forced liquidations to satisfy a leverage requirement.  This curve is nearly linear from $\alpha = 0\%$ until nearly $\alpha = 80\%$, which is to be expected as the leverage is a function of the illiquid assets.  

\begin{figure}
\centering
\includegraphics[height=0.4\textheight,width=0.8\textwidth]{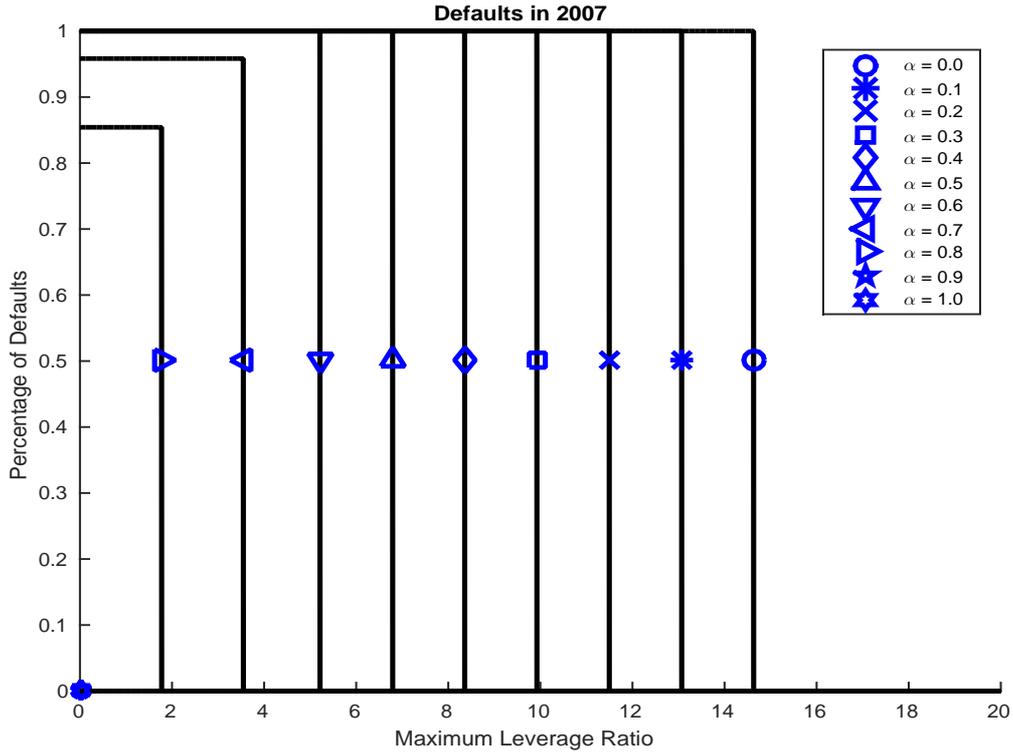}
\caption{Example~\ref{Ex:1asset-portfolio} for 2007: Fraction of firms that default on a portion of their debt over varying diversification between the liquid and illiquid asset ($\alpha$).}
\label{Fig:1asset_alpha_2007}
\end{figure}

\begin{figure}
\centering
\includegraphics[height=0.4\textheight,width=0.8\textwidth]{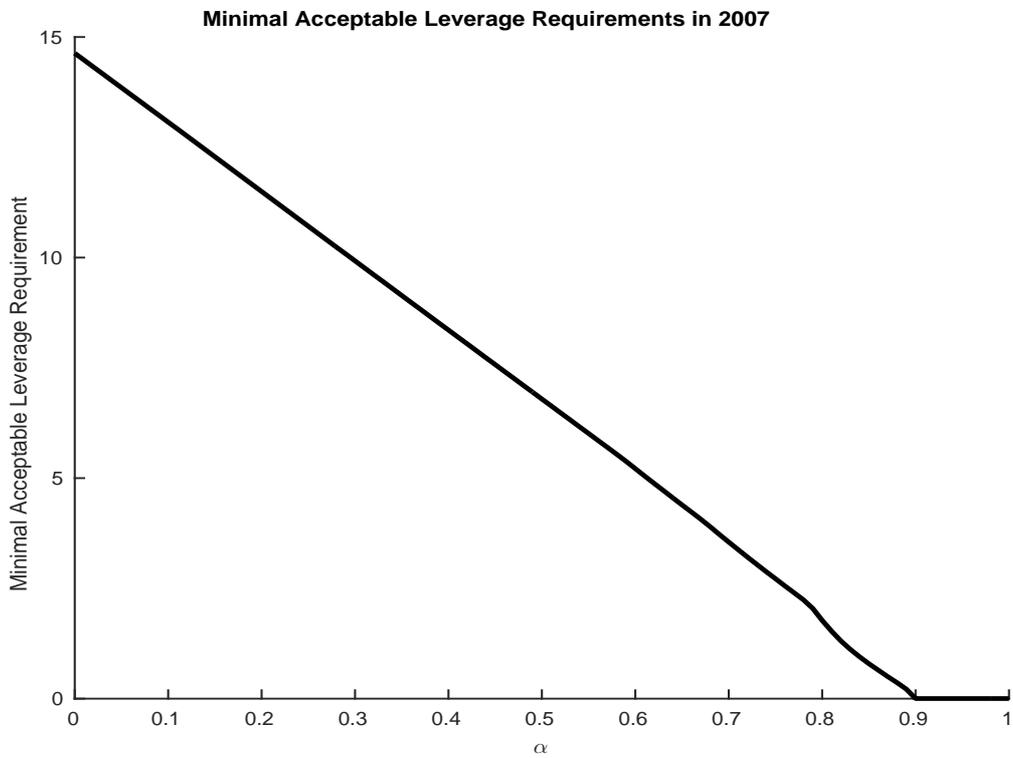}
\caption{Example~\ref{Ex:1asset-portfolio} for 2007: Minimal acceptable leverage requirements over varying diversification between liquid and illiquid asset ($\alpha$).}
\label{Fig:1asset_alpha_minimum_2007}
\end{figure}
\end{example}

\begin{example}\label{Ex:1asset-network}
Again let us consider a market with a single (representative) illiquid asset as well as the liquid asset.  Assume a network with random connections, though in this example we will vary the probability of a connection being formed.  Additionally, we will vary the maximum percentage of a firm's assets which come from interbank liabilities. Throughout this example we will fix the portfolio holdings so that $\alpha = 20\%$ of the holding's initial value is in the liquid asset. As there is only a single asset the liquidation strategy is uniquely defined in Example~\ref{Ex:1stock-gamma}.  And we will consider the same inverse demand function as in the prior example, i.e., 
\[F(z) = \begin{cases} 1 - \frac{2z}{3 \times 10^8}& \text{if } z \leq 5 \times 10^7 \\ \frac{10^4\sqrt{2}}{3\sqrt{z}}& \text{if } z \geq 5 \times 10^7 \end{cases}.\]

In Figure~\ref{Fig:1asset_network_pin_2007} we plot the fraction of obligations to the outside node $0$ that are actualized after the clearing mechanism is completed and in Figure~\ref{Fig:1asset_network_minimum_2007} we plot both the minimal acceptable leverage requirements and the fraction of obligations to the outside node $0$ that are actualized after the clearing mechanism is completed when the leverage requirements are set to 0 with the data from 2007.  As with Example~\ref{Ex:1asset-portfolio} above, we choose to display only the plots from 2007 as it is a representative result for each subsequent year as well.  In fact, in Figure~\ref{Fig:1asset_network_pin_2007}, we plot only the fraction of obligations that are realized because the plot for fraction of firms which defaults and the fraction of firms which liquidate all their assets have the same structure with critical leverage levels displayed in Figure~\ref{Fig:1asset_network_minimum_2007}.  In this scenario we assume the probability of forming a connection between any two distinct firms is fixed at $40\%$.  Throughout we vary the maximal allowed fraction of initial assets that come from the interbank liabilities.  
Most interestingly about these plots is that, though as the fraction of assets from interbank liabilities increases the system is more stable (a higher fraction of liabilities are paid) at low leverage requirements, there is a point where increasing the percentage of interbank liabilities makes the system less robust in view of the critical leverage level.  It is to be expected that at low leverage requirements greater connectivity equates to more robust system behavior -- as a larger fraction of assets are from interbank liabilities, the banks become less susceptible to fire sales.  However, interestingly, there is a connectivity level beyond which the contagion through direct linkages overwhelms the beneficial aspects of risk sharing.
\begin{figure}
\centering
\includegraphics[height=0.4\textheight,width=0.8\textwidth]{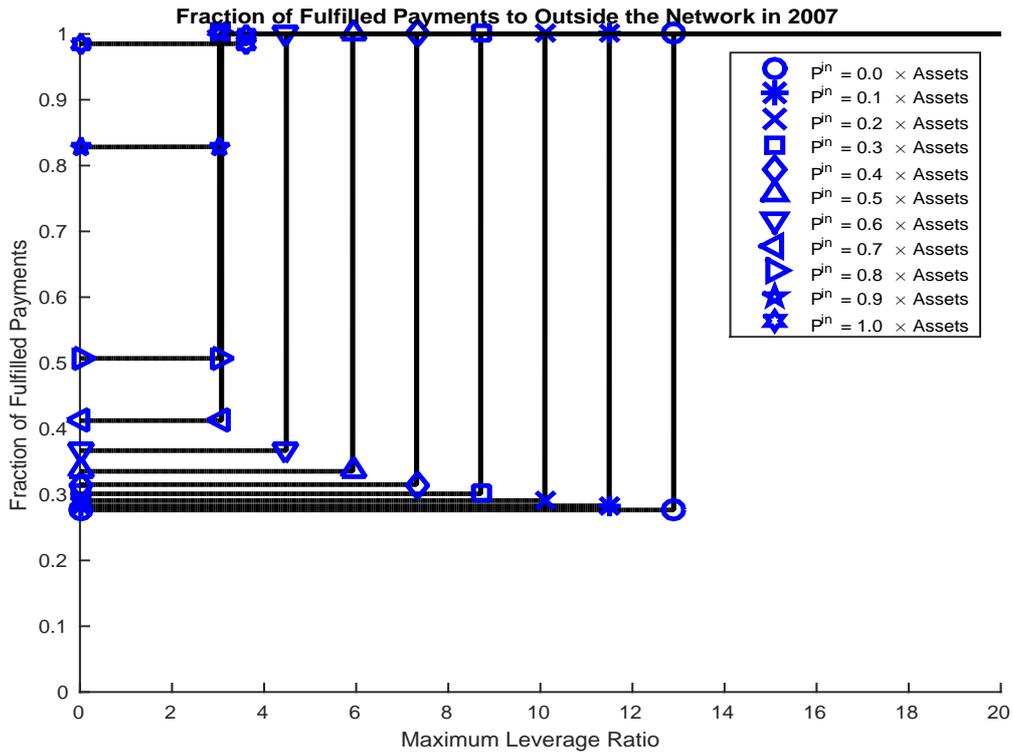}
\caption{Example~\ref{Ex:1asset-network} for 2007: Fraction of obligations to outside the system that are realized varying the percentage of assets that come from interbank liabilities ($P^\text{in}$) when the probability of forming a connection within the system is $40\%$.}
\label{Fig:1asset_network_pin_2007}
\end{figure}

\begin{figure}
\centering
\includegraphics[height=0.4\textheight,width=0.8\textwidth]{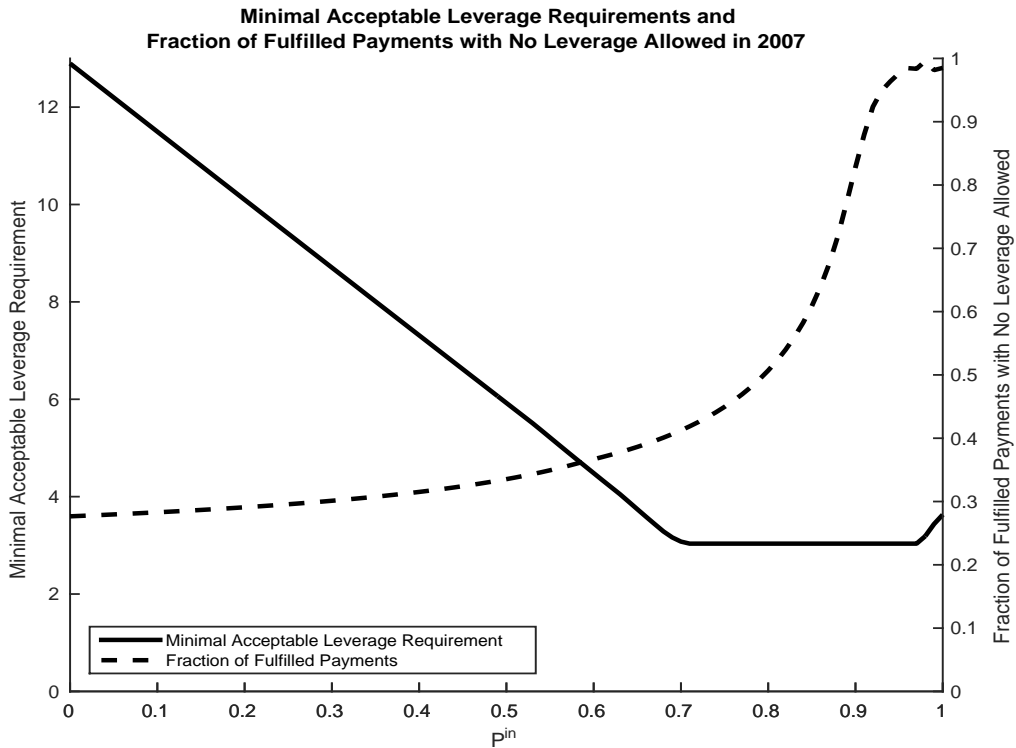}
\caption{Example~\ref{Ex:1asset-network} for 2007: Minimal acceptable leverage requirements [left axis] and fraction of fulfiilled payments to outside the financial system when leverage requirement is set at 0 [right axis] varying the percentage of assets that come from interbank liabilities ($P^\text{in}$) when the probability of forming a connection within the system is $40\%$.}
\label{Fig:1asset_network_minimum_2007}
\end{figure}

In Table~\ref{Table:1asset_network_prob} we provide the minimal safe leverage requirements in each year given varying connection probabilities.  For this situation we fix the maximal fraction of a firms assets from incoming liabilities at $10\%$.  Interestingly, though a clear trend, the minimal acceptable leverage requirements are not monotonically decreasing with the connection probabilities.  This is even when normalizing for the realized sampled number of connections.  As opposed to Figure~\ref{Fig:1asset_network_pin_2007}, since there is a maximum percent of a firms assets in interbank liabilities, the increased connectivity tends to improve the systemic risk via risk sharing properties.  Finally, as observed in Example~\ref{Ex:1asset-portfolio} previously and as expected, the financial system appears to be in its worst shape (by a significant margin) in 2008 and appears to be the most stable in 2012 before trending more risky after that trough.
\begin{table} \label{Table:1asset_network_prob}
\centering
\begin{tabular}{|c|*{8}{c}|}
\hline
\backslashbox{Prob.}{Year} & 2007 & 2008 & 2009 & 2010 & 2011 & 2012 & 2013 & 2014 \\ \hline
0.00 & 12.8975 & 17.5200 & 12.5750 & 11.3675 & 10.5875 & 10.4000 & 11.1525 & 11.5200 \\
0.01 & 12.8350 & 17.5200 & 12.5750 & 10.1025 & 10.1450 & ~9.7875 & 11.1425 & 11.5200 \\
0.02 & 12.8975 & 17.5200 & 11.2175 & 10.1025 & ~9.4300 & 10.2700 & 10.8200 & 10.2675 \\
0.03 & 11.4975 & 17.0800 & 11.2175 & 11.0200 & 10.0675 & ~9.7550 & 11.1525 & 11.5200 \\
0.04 & 11.4975 & 15.6675 & 11.2175 & 10.1025 & ~9.7600 & ~9.7550 & ~9.9375 & 10.2675 \\
0.05 & 11.4975 & 15.6675 & 11.2175 & 10.8550 & ~9.4300 & ~9.2600 & ~9.9375 & 10.5275 \\
0.06 & 12.8975 & 15.6675 & 11.2175 & 10.1025 & ~9.4300 & ~9.2600 & ~9.9375 & 10.2675 \\
0.07 & 11.4975 & 15.6675 & 11.2175 & 10.1025 & ~9.4300 & ~9.2600 & 10.5625 & 10.2675 \\
0.08 & 11.4975 & 15.6675 & 12.5750 & 10.1025 & ~9.5525 & ~9.2600 & ~9.9375 & 10.2675 \\
0.09 & 11.4975 & 15.6675 & 11.2175 & 10.1025 & ~9.4300 & ~9.2600 & ~9.9375 & 10.2675 \\
0.10 & 11.4975 & 15.6675 & 11.2175 & 10.1025 & 10.0675 & ~9.2600 & ~9.9375 & 10.2675 \\
0.20 & 11.4975 & 15.6675 & 11.2175 & 10.1025 & ~9.4300 & ~9.2600 & ~9.9375 & 10.2675 \\
0.30 & 11.4975 & 15.6675 & 11.2175 & 10.1025 & ~9.4300 & ~9.2600 & ~9.9375 & 10.2675 \\
0.40 & 11.4975 & 15.6675 & 11.2175 & 10.1025 & ~9.4300 & ~9.2600 & ~9.9375 & 10.2675 \\
0.50 & 11.4975 & 15.6675 & 11.2175 & 10.1025 & ~9.4300 & ~9.2600 & ~9.9375 & 10.2675 \\ \hline
\end{tabular}
\caption{Example~\ref{Ex:1asset-network}: Minimal acceptable leverage requirements as a function of the probability of a connection.}
\end{table}
\end{example}

\begin{example}\label{Ex:multiasset}
Let us now consider a market with $m = 10$ illiquid assets.  As before assume that $\alpha = 20\%$ of the portfolio's initial wealth is in the liquid asset.  The remainder of the portfolio holdings will be spread amongst the illiquid assets with varying parameter $\sigma$.  The inverse demand function for every asset $k$ is given by the concave inverse demand function 
\[F_k(z) = 1 - \left(\frac{z_k}{(2.5)^{2/3}\sum_{i = 1}^n s_{ik} +10^{-10}}\right)^{3/2}.\]  
Note that, for a fixed system-wide portfolio $S$, on the domain of interest (i.e., $[0,\sum_{i = 1}^n s_i]$) the inverse demand function $F$ is nonincreasing and strictly concave.
The asset liquidations will be done among varying strategies presented previously: proportional liquidation in Example~\ref{Ex:proportional-gamma}, (a continuous approximation of) liquidating the worst performing assets first in Example~\ref{Ex:price-gamma-worst}, and the equilibrium liquidation strategy in Section~\ref{Sec:games}.  Further assume the network with $25\%$ chance of a connection between any two distinct firms and at most $10\%$ of any firms initial assets come from interbank liabilities.

The first set of comparisons was taken over different liquidation strategies at diversification $\sigma = 0.5$.  In particular, we compute the system for proportional liquidation (Example~\ref{Ex:proportional-gamma}), a continuous approximator of worst asset liquidation (Example~\ref{Ex:price-gamma-worst}), and the equilibrium liquidation strategy (Section~\ref{Sec:games}).  As was noticed in Example 5.4 of~\ci*{feinstein2015illiquid}, the liquidation strategy does not have a strong impact on the risk level as measured by the size of defaults.  
Though Figure~\ref{Fig:multiasset_2007} appears to show a large deviation for the equilibrium trading strategy, this is a result of no convergent clearing payment (and linear interpolation) along the sloped part of the curve.  Noticeably the proportional liquidation is safer than the worst asset liquidation strategy.  This is as expected due to the implications of concavity of the inverse demand function on the velocity of price impact.
\begin{figure}
\centering
\includegraphics[height=0.4\textheight,width=0.8\textwidth]{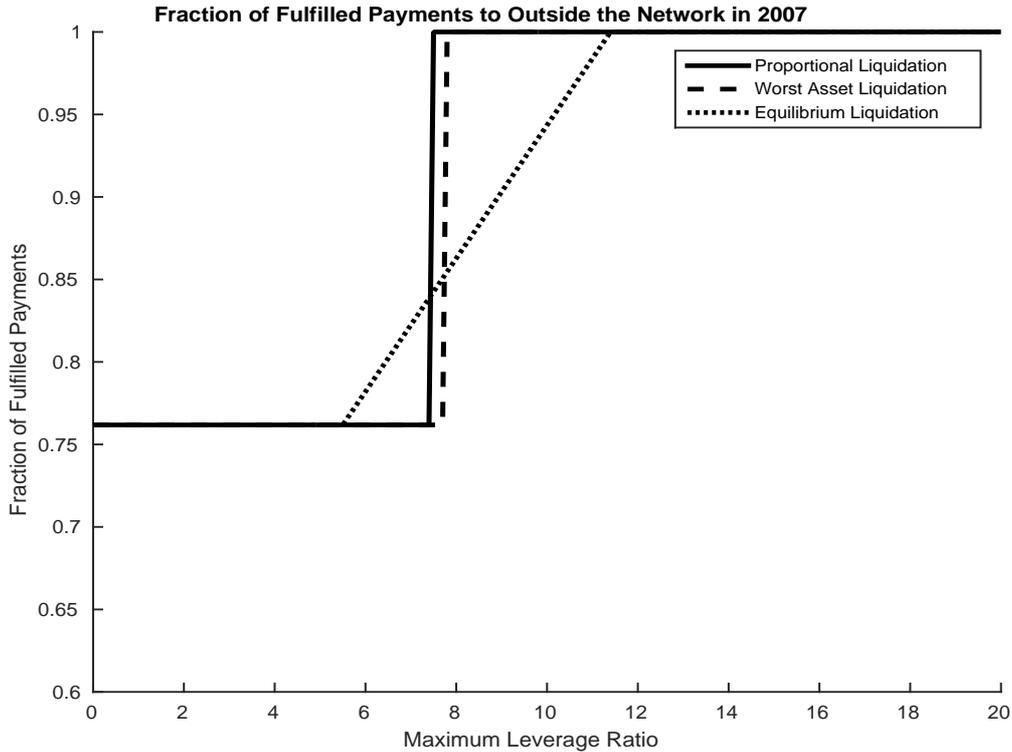}
\caption{Example~\ref{Ex:multiasset} for 2007: Comparing risk levels under varying liquidation strategies with diversification parameter $\sigma = 0.5$.}
\label{Fig:multiasset_2007}
\end{figure}

Since all liquidation strategies perform comparably, we will now consider only the proportional liquidation strategy.  Here we consider varying diversification.  Interestingly, the minimal safe leverage requirement is only very weakly dependent on the diversification parameter $\sigma$ with a high probability.  
In Figure~\ref{Fig:multiasset_minimum_confidence_2011} we plot the minimal acceptable leverage requirements as a function of the diversification from 2011 over 1000 simulated portfolios.  We chose this year as it is representative, but shows a more distinct difference than the other years; we skip the plot of the number of firms defaulting as the critical leverage requirement is a threshold that below it 100\% of banks are defaulting and above 0\% are.  In general, though not monotonic in every realization, greater diversification (i.e., lower $\sigma$) leads to a safer system as would be expected.  As demonstrated by the confidence intervals in Figure~\ref{Fig:multiasset_minimum_confidence_2011}, the behavior of the system with respect to the diversification is very sensitive to the actualized parameters; this demonstrates the truly nonlinear nature between fire sales and the critical leverage threshold.
\begin{figure}
\centering
\includegraphics[height=0.4\textheight,width=0.8\textwidth]{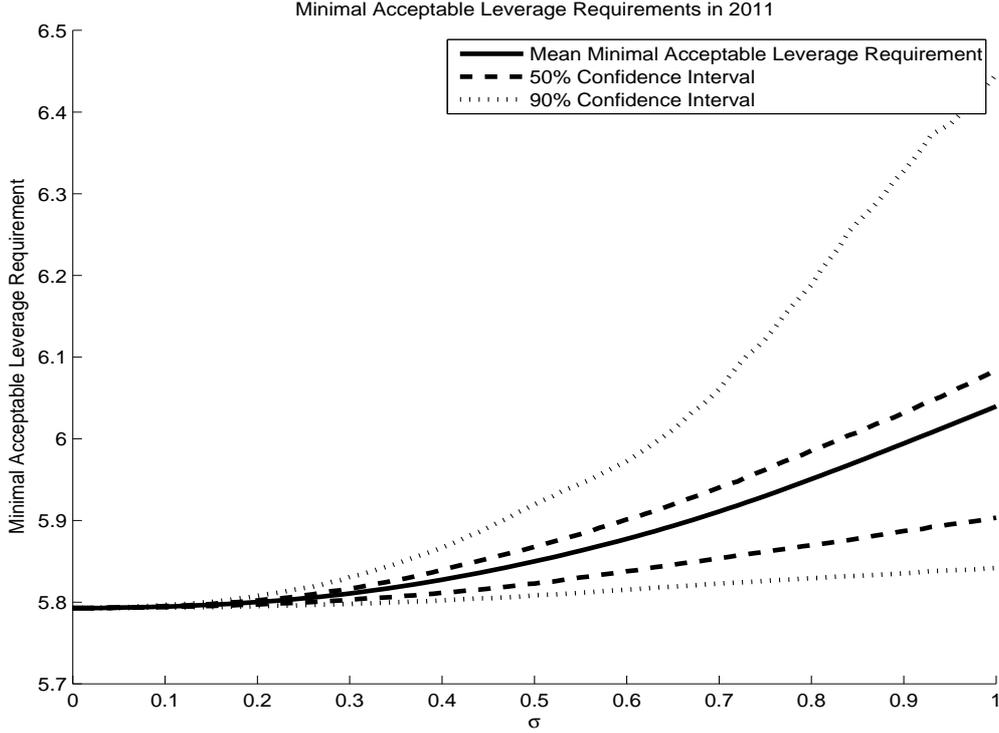}
\caption{Example~\ref{Ex:multiasset} for 2011: Minimal acceptable leverage requirements under proportional liquidation with varying diversification parameters ($\sigma$).} 
\label{Fig:multiasset_minimum_confidence_2011}
\end{figure}
\end{example}

\begin{example}\label{Ex:leverage}
Finally, we will consider a counterfactual scenario of assets and liabilities generated from the US banking data discussed in Appendix~\ref{Sec:data}.  In this example we will allow firms to borrow or lend money to increase both liabilities and assets by the same value $w_i$ so that they begin the system at the leverage requirement, i.e.,
\[\frac{\bar p_i + w_i}{x_i + \sum_{k = 1}^m s_{ik} q_k + \sum_{j = 1}^n a_{ji} \bar p_j - \bar p_i} = \lambda^{\max}_i.\]
Additionally, in order to prompt a systemic crisis we assume that, before the clearing mechanism is run, all firms have their portfolio holdings reduced by some parameter $\beta \in [0,1]$ due to a systematic risk factor.  That is, if initial portfolio holdings in the liquid and illiquid assets were $x_i$ and $s_i$ (respectively) for some firm $i$, after the stress the same holdings are given by $(1-\beta)x_i$ and $(1-\beta)s_i$.  Note that though in this scenario we are stressing the number of assets held, the same effect could be accomplished by stressing the initial prices of the assets.

For the network setup, consider a market with a single (representative) illiquid asset with inverse demand function
\[F(z) = \begin{cases} 1 - \frac{2z}{3 \times 10^8}& \text{if } z \leq 5 \times 10^7 \\ \frac{10^4\sqrt{2}}{3\sqrt{z}}& \text{if } z \geq 5 \times 10^7 \end{cases}.\]
As before, let the initial portfolio for each firm will be $\alpha = 20\%$ in the liquid asset and $80\%$ invested in the illiquid asset.  Since there is only a single illiquid asset, the liquidation strategy is uniquely defined by Example~\ref{Ex:1stock-gamma}.  To construct the network assume a probability of connections to be $25\%$ and the maximum of $10\%$ of any firms assets come from interbank liabilities.

First we study what happens to the number of defaulting firms, the number of firms that violate the leverage requirement after the clearing mechanism completes, and the amount of obligations paid to the outside node $0$.  This is shown in Figure~\ref{Fig:leverage_2007} for the data of year 2007; each subsequent year is of the same form.  Note that the fraction of firms defaulting and the fraction that violate the leverage requirement coincide exactly.  Of particular note is that the system is stable until some threshold leverage requirement as in the previous examples.  But now, the direction of increased risk is flipped from the prior examples, i.e., lower leverage requirements result in more stable systems during a systematic shock.  The reason for the difference between this result and the prior ones is that previously the initial leverage was not dependent on the leverage requirements.  Therefore a lower leverage requirement leads to greater liquidations and fire sales.  However, in this scenario lower leverage requirements lead to fewer investments and thus lower sensitivity to price fluctuations.  The second note for this plot is that the curve for the payments to the outside economy is no longer flat after the threshold leverage requirement.  That is, when defaults do occur, the size of losses increases the greater the allowable leverage becomes.
\begin{figure}
\centering
\includegraphics[height=0.4\textheight,width=0.8\textwidth]{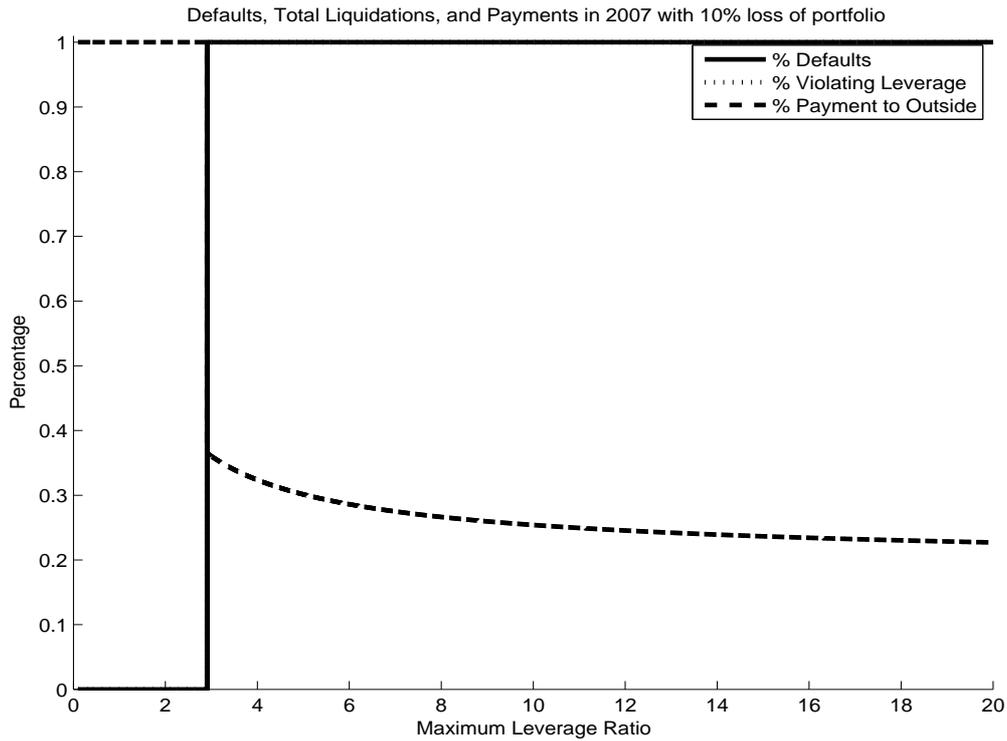}
\caption{Example~\ref{Ex:leverage} for 2007: Fractional number of firms defaulting, liquidating all assets, and amount of obligations paid to outside the system for a fixed shock $\beta = 10\%$.}
\label{Fig:leverage_2007}
\end{figure}

In Figure~\ref{Fig:leverage_beta_2007} we study 2007 yet again as a representative of all following years.  We focus on the fraction of liabilities being paid to the outside node as a function of the shock size $\beta$.  As expected, the larger the shock size, the larger the losses to the outside node and the earlier the losses occur.  In particular, without a systematic shock there are no losses within the system.  But even at $\beta = 10\%$ valuation loss, the systemic risk becomes significant even for low leverage ratios.  As the shock size increases, the percentage of losses to the outside node grow, as is expected.
\begin{figure}
\centering
\includegraphics[height=0.4\textheight,width=0.8\textwidth]{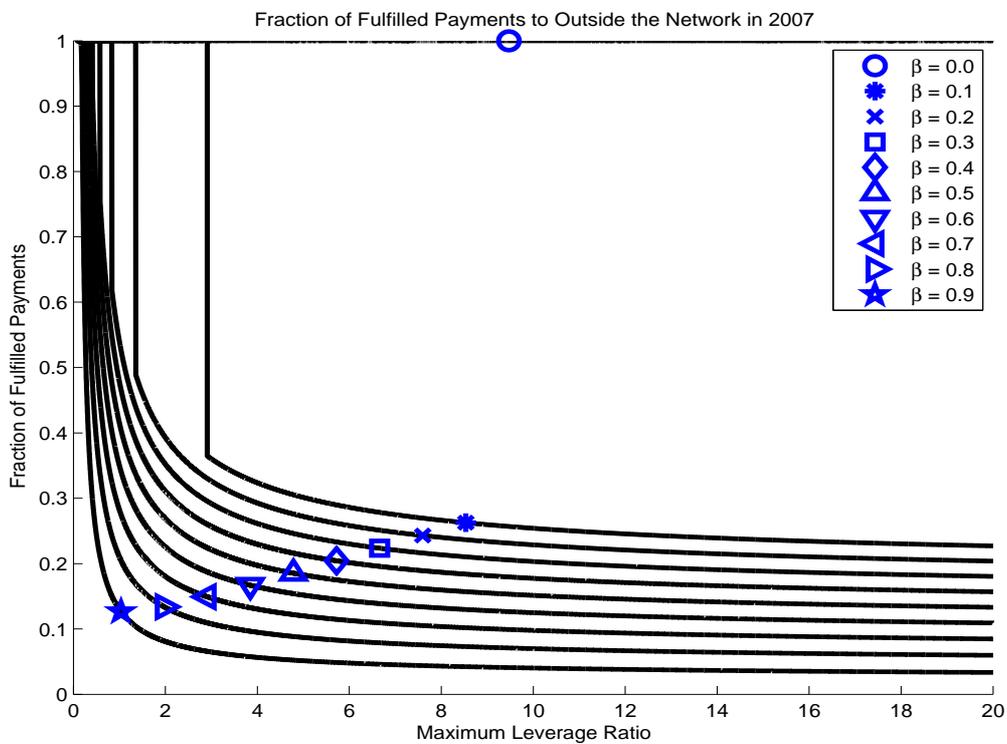}
\caption{Example~\ref{Ex:leverage} for 2007: Fraction of obligations to outside the system that are realized varying the shock size ($\beta$).}
\label{Fig:leverage_beta_2007}
\end{figure}
\end{example}

\section{Conclusion}
In this paper we extended the financial contagion model of \ci*{EN01} to consider the implications that leverage and asset liquidation strategies can have on the spread of systemic risk.  In particular, we consider the case in which some firm begins a fire sale in order to delever if mark-to-market accounting leaves them violating the leverage constraint.  This generalizes \ci*{feinstein2015illiquid} -- which considers asset liquidation strategies, but not the presence of leverage constraints -- in the direction of \ci*{CFS05} in which fire sales are triggered by capital adequacy requirements.  Mathematically we prove conditions for the existence of a clearing payment and asset prices under either a known liquidation strategy or under the game theoretic strategy of wealth maximization (in which case the liquidation strategy itself becomes part of the equilibrium solution).

We focus the majority of our attention on the numerical case studies in which we calibrate a network model to asset and liability data for large financial institutions within the United States from the years 2007-2014.  We consider multiple parameters that we can vary while staying true to the data.  By varying these parameters in a systematic way, we are able to run counterfactual scenarios and draw broad conclusions about the health of the financial system.  In particular, we find that once the system is begun, a greater maximum leverage requirement leads to fewer fire sales and thus a more stable system.  In contrast, if we create a counterfactual financial system so that the firms begin at the maximum leverage requirement, a stricter leverage requirement leads to fewer assets to be impacted by price fluctuations through mark-to-market accounting.  Thus under this counterfactual system, stricter leverage requirements produce a system more robust to systematic shocks.

\bibliography{bibtex2}
\bibliographystyle{myjmr}

\clearpage
\appendix

\section{Data}\label{Sec:data}
In Section~\ref{Sec:cases} we utilize asset and liability data for 50 large financial institutions in the United States from 2007-2014.  This data is broken down yearly into total assets and total liabilities.  A subset of the collated data is presented in Table~\ref{Table:data}.  The remainder of this section displays information on the leverage ratios for the financial institutions as a group.  Histograms of the realized leverage ratios for each year are displayed in Figures~\ref{Fig:hist_2007}-\ref{Fig:hist_2014}.  Figures~\ref{Fig:scatter_2007}-\ref{Fig:scatter_2014} display the realized leverage ratios against total assets, liabilities, and equity for each firm as scatter plots.  We do this to demonstrate, in general, that there is no discernible relation between firm size and realized leverage ratios.  This can be used to justify our choice of a single leverage requirement for each firm in the case studies of Section~\ref{Sec:cases} (as opposed to leverage requirements that depend on, e.g., initial firm size).

\newcolumntype{H}{>{\iffalse}l<{\fi}@{}}
\begin{sidewaystable}\label{Table:data}
\centering
\begin{tabular}{|l|HcH|*{5}{r|}}
\hline 
\multicolumn{9}{|c|}{Total Assets}\\ \hline
Banks Name&City &State &Parent Name &TotAssets 2007&TotAssets 2008&TotAssets 2009&TotAssets 2010&TotAssets 2011\\ \hline 
JPMorgan Chase Bank, N.A.&Columbus&OH&JPMorgan Chase \& Co.&1,318,888,000&1,746,242,000&1,627,684,000&1,631,621,000&1,811,678,000\\
Bank of America, N.A.&Charlotte&NC&Bank of America Corporation&1,312,794,218&1,470,276,918&1,465,221,449&1,482,278,257&1,459,157,302\\
Wells Fargo Bank, N.A.&Sioux Falls&SD&Wells Fargo \& Company&467,861,000&538,958,000&608,778,000&1,102,278,000&1,161,490,000\\
Citibank, N.A.&Sioux Falls&SD&Citigroup Inc.&1,251,715,000&1,227,040,000&1,161,361,000&1,154,293,000&1,288,658,000\\
U.S. Bank National Association&Cincinnati&OH&U.S. Bancorp&232,759,503&261,775,591&276,376,130&302,259,544&330,470,810\\
PNC Bank, N.A.&Wilmington&DE&PNC Financial Services Group, Inc.&124,782,289&140,777,455&260,309,849&256,638,747&263,309,559\\
Bank of New York Mellon&New York&NY&Bank of New York Mellon Corporation&115,672,000&195,164,000&164,275,000&181,792,000&256,205,000\\
State Street Bank and Trust Company&Boston&MA&State Street Corporation&134,001,964&171,227,778&153,740,526&155,528,576&212,292,942\\
Capital One, N.A.&McLean&VA&Capital One Financial Corporation&97,517,902&115,142,306&127,360,045&126,901,280&133,477,760\\
TD Bank, N.A.&Wilmington&DE&Toronto-Dominion Bank&45,485,973&101,632,075&140,038,551&168,748,912&188,912,554\\ \hline 
\hline
\multicolumn{9}{|c|}{Total Liabilities}\\ \hline
Banks Name&City &State &Parent Name &TotLiab 2007&TotLiab 2008&TotLiab 2009&TotLiab 2010&TotLiab 2011\\ \hline 
JPMorgan Chase Bank, N.A.&Columbus&OH&JPMorgan Chase \& Co.& 1,211,274,000 & 1,616,446,000 & 1,499,365,000 & 1,508,222,000 & 1,680,723,000 \\
Bank of America, N.A.&Charlotte&NC&Bank of America Corporation& 1,202,530,204 & 1,336,942,149 & 1,298,530,760 & 1,302,464,744 & 1,281,297,032 \\
Wells Fargo Bank, N.A.&Sioux Falls&SD&Wells Fargo \& Company& 426,089,000 & 497,153,000 & 552,185,000 & 978,716,000 & 1,036,999,000 \\
Citibank, N.A.&Sioux Falls&SD&Citigroup Inc.& 1,151,143,000 & 1,144,600,000 & 1,043,468,000 & 1,026,333,000 & 1,136,309,000 \\
U.S. Bank National Association&Cincinnati&OH&U.S. Bancorp& 210,012,660 & 238,925,837 & 250,150,518 & 271,432,471 & 295,803,555 \\
PNC Bank, N.A.&Wilmington&DE&PNC Financial Services Group, Inc.& 109,846,056 & 127,862,023 & 228,481,932 & 222,888,583 & 227,956,188 \\
Bank of New York Mellon&New York&NY&Bank of New York Mellon Corporation& 106,543,000 & 183,441,000 & 150,538,000 & 165,927,000 & 237,946,000 \\
State Street Bank and Trust Company&Boston&MA&State Street Corporation& 122,060,508 & 157,881,228 & 139,072,508 & 138,831,760 & 193,568,478 \\
Capital One, N.A.&McLean&VA&Capital One Financial Corporation& 77,436,577 & 95,108,980 & 103,905,623 & 102,673,731 & 108,780,941 \\
TD Bank, N.A.&Wilmington&DE&Toronto-Dominion Bank& 35,794,451 & 83,340,196 & 117,537,086 & 142,907,066 & 161,094,419 \\ \hline
\end{tabular}
\caption{A sample of the data of assets and liabilities for US banks.}
\end{sidewaystable}

\begin{figure}
\centering
\includegraphics[height=0.4\textheight,width=0.8\textwidth]{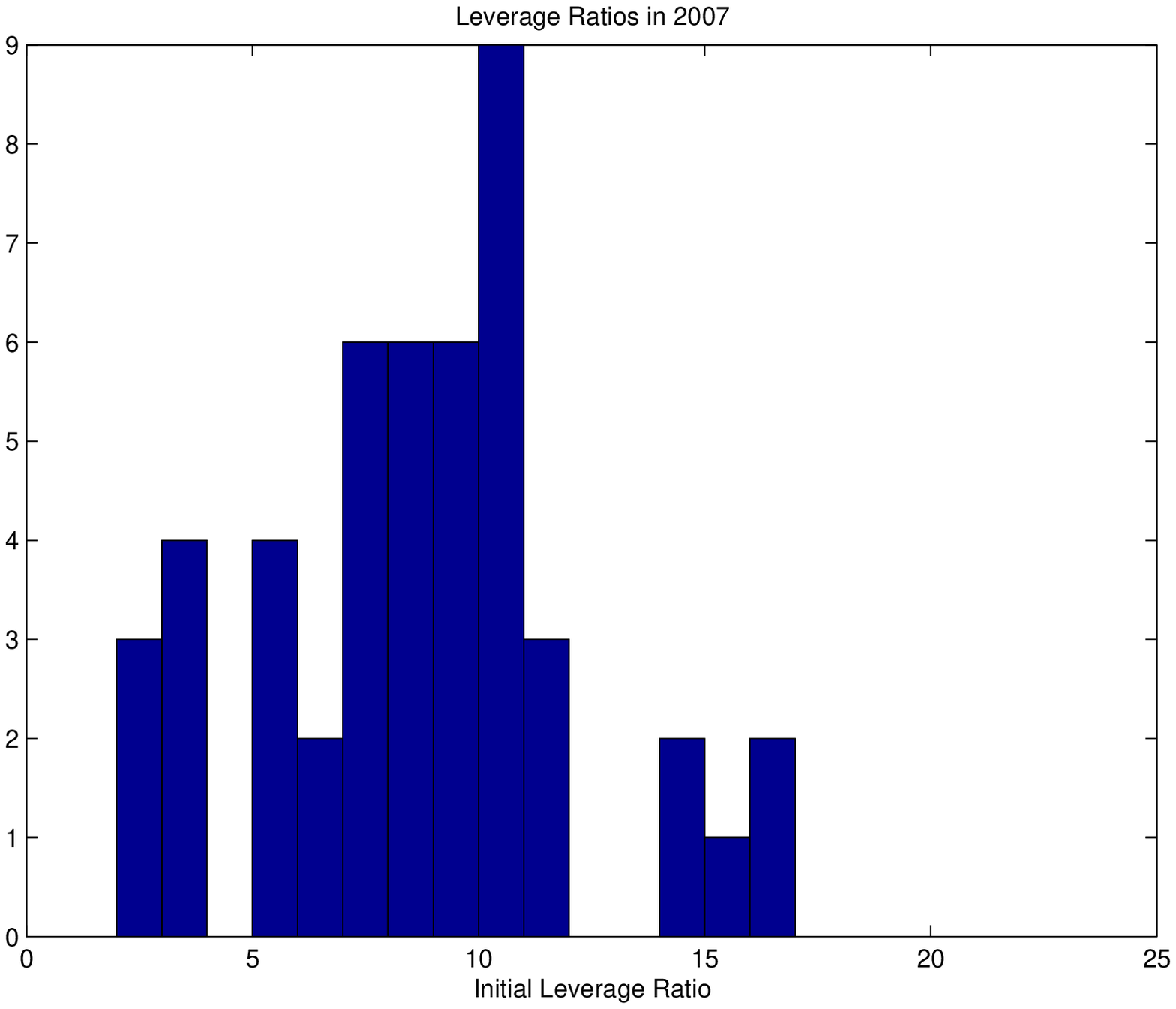}
\caption{A histogram of the leverage ratios from the initial data set on US banks in 2007.}
\label{Fig:hist_2007}
\end{figure}

\begin{figure}
\centering
\includegraphics[height=0.4\textheight,width=0.8\textwidth]{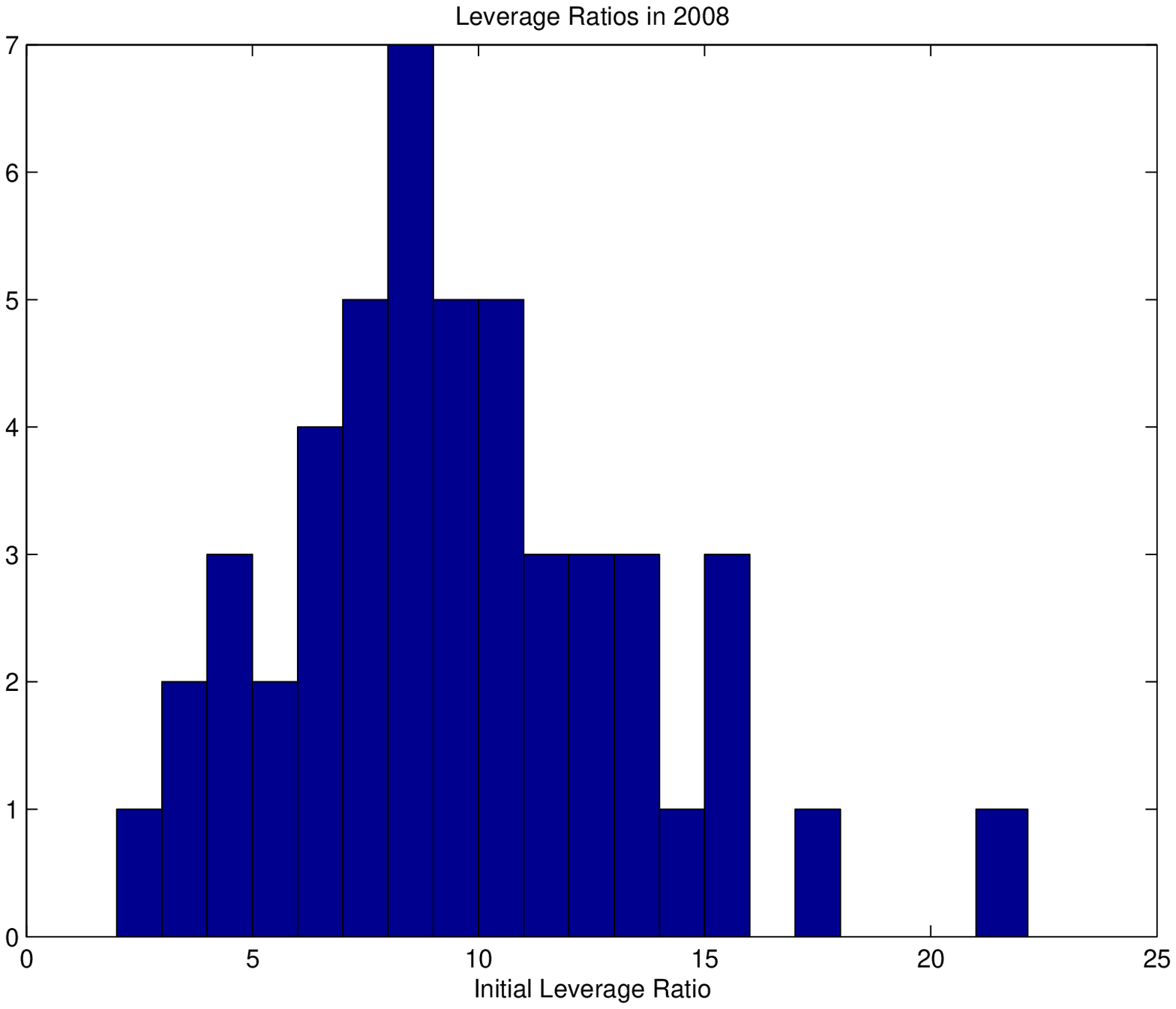}
\caption{A histogram of the leverage ratios from the initial data set on US banks in 2008.}
\label{Fig:hist_2008}
\end{figure}

\begin{figure}
\centering
\includegraphics[height=0.4\textheight,width=0.8\textwidth]{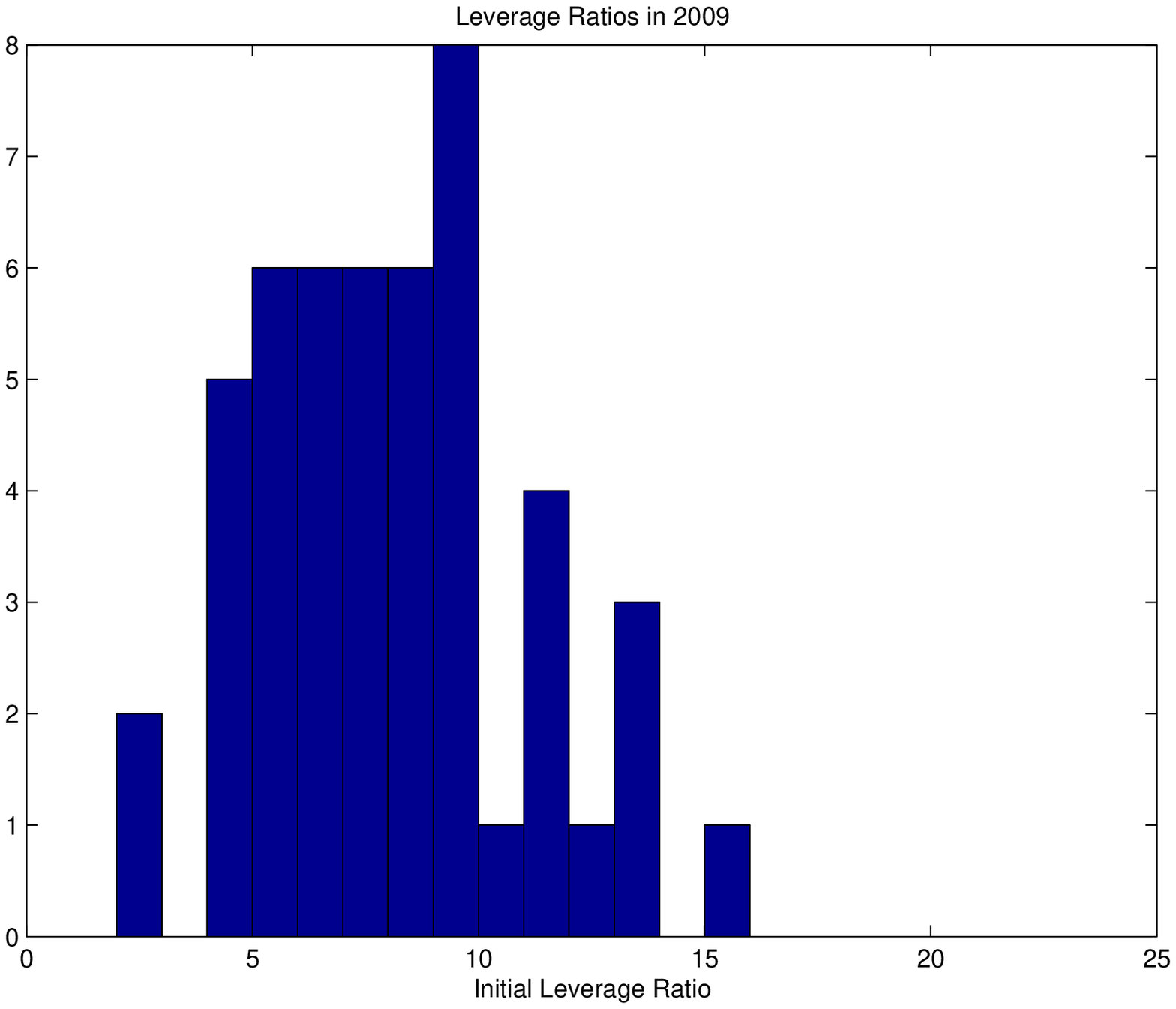}
\caption{A histogram of the leverage ratios from the initial data set on US banks in 2009.}
\label{Fig:hist_2009}
\end{figure}

\begin{figure}
\centering
\includegraphics[height=0.4\textheight,width=0.8\textwidth]{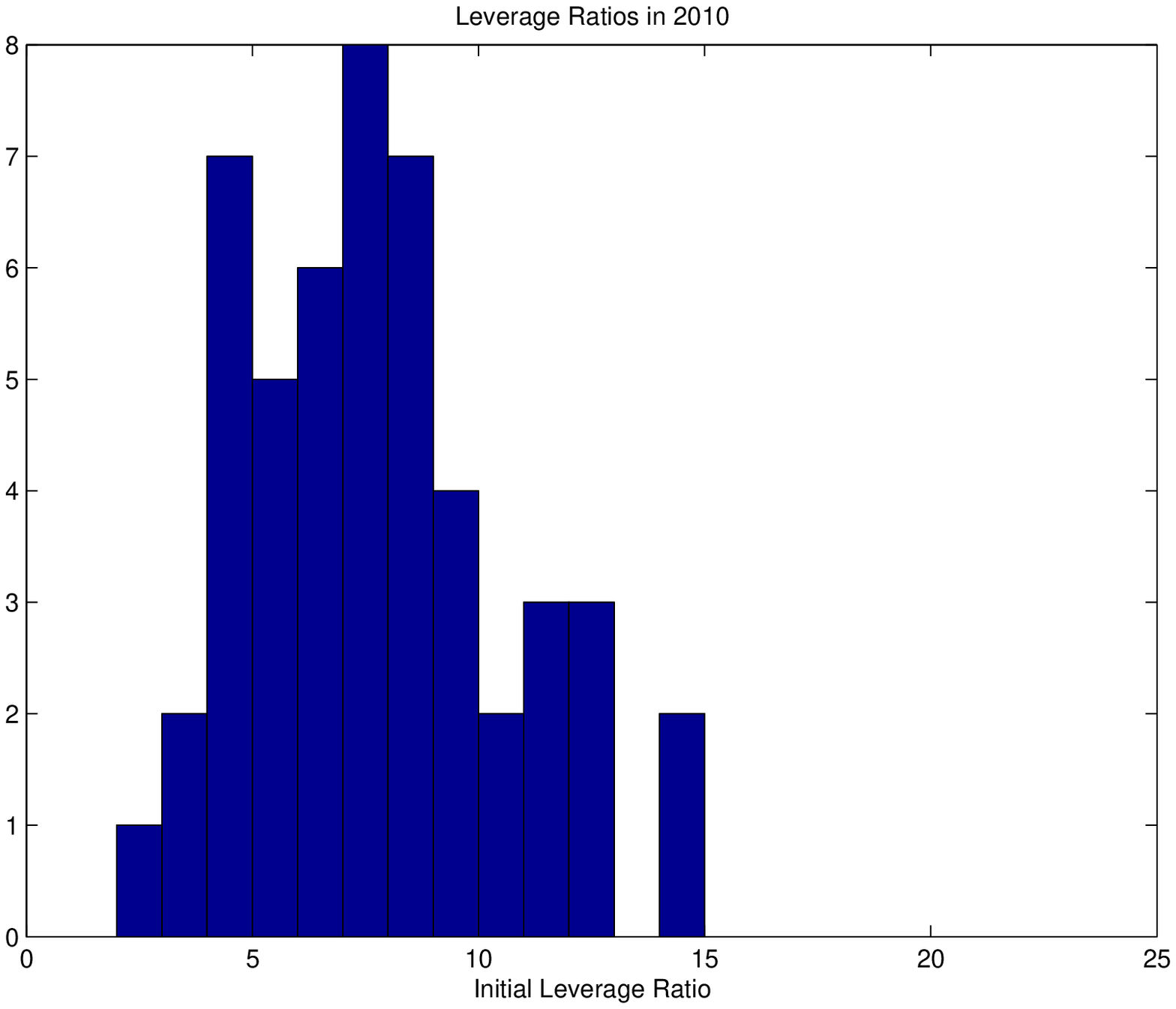}
\caption{A histogram of the leverage ratios from the initial data set on US banks in 2010.}
\label{Fig:hist_2010}
\end{figure}

\begin{figure}
\centering
\includegraphics[height=0.4\textheight,width=0.8\textwidth]{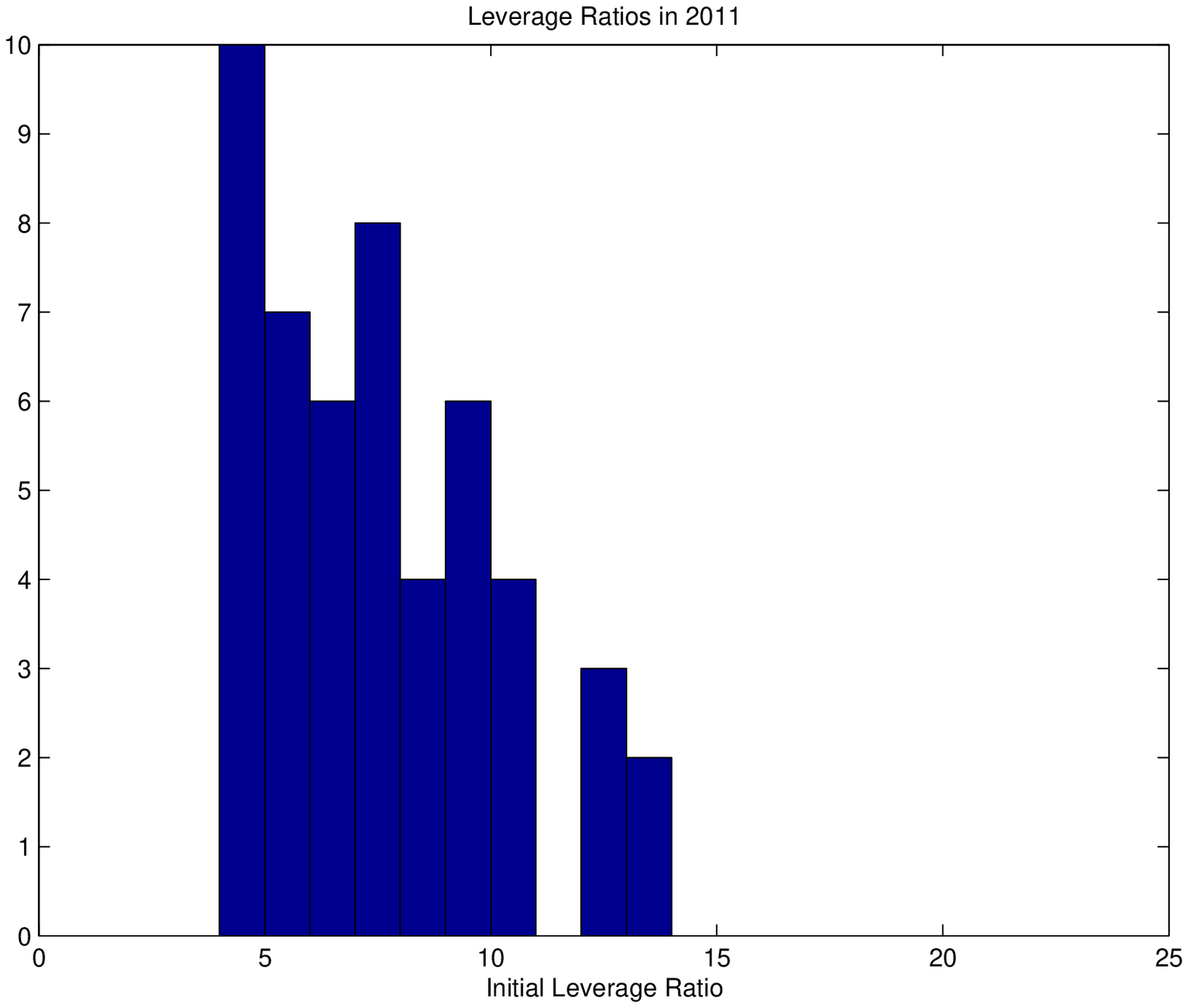}
\caption{A histogram of the leverage ratios from the initial data set on US banks in 2011.}
\label{Fig:hist_2011}
\end{figure}

\begin{figure}
\centering
\includegraphics[height=0.4\textheight,width=0.8\textwidth]{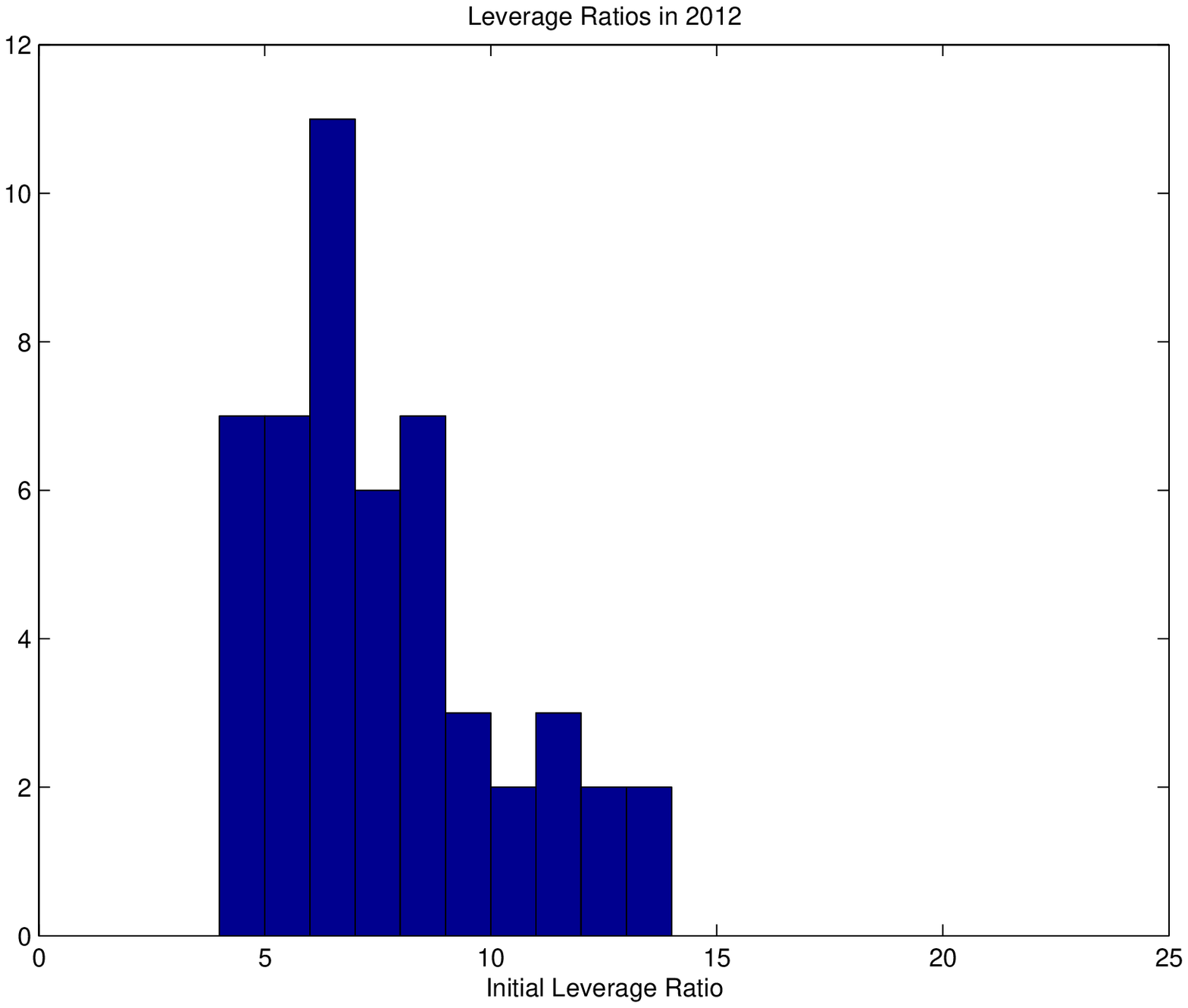}
\caption{A histogram of the leverage ratios from the initial data set on US banks in 2012.}
\label{Fig:hist_2012}
\end{figure}

\begin{figure}
\centering
\includegraphics[height=0.4\textheight,width=0.8\textwidth]{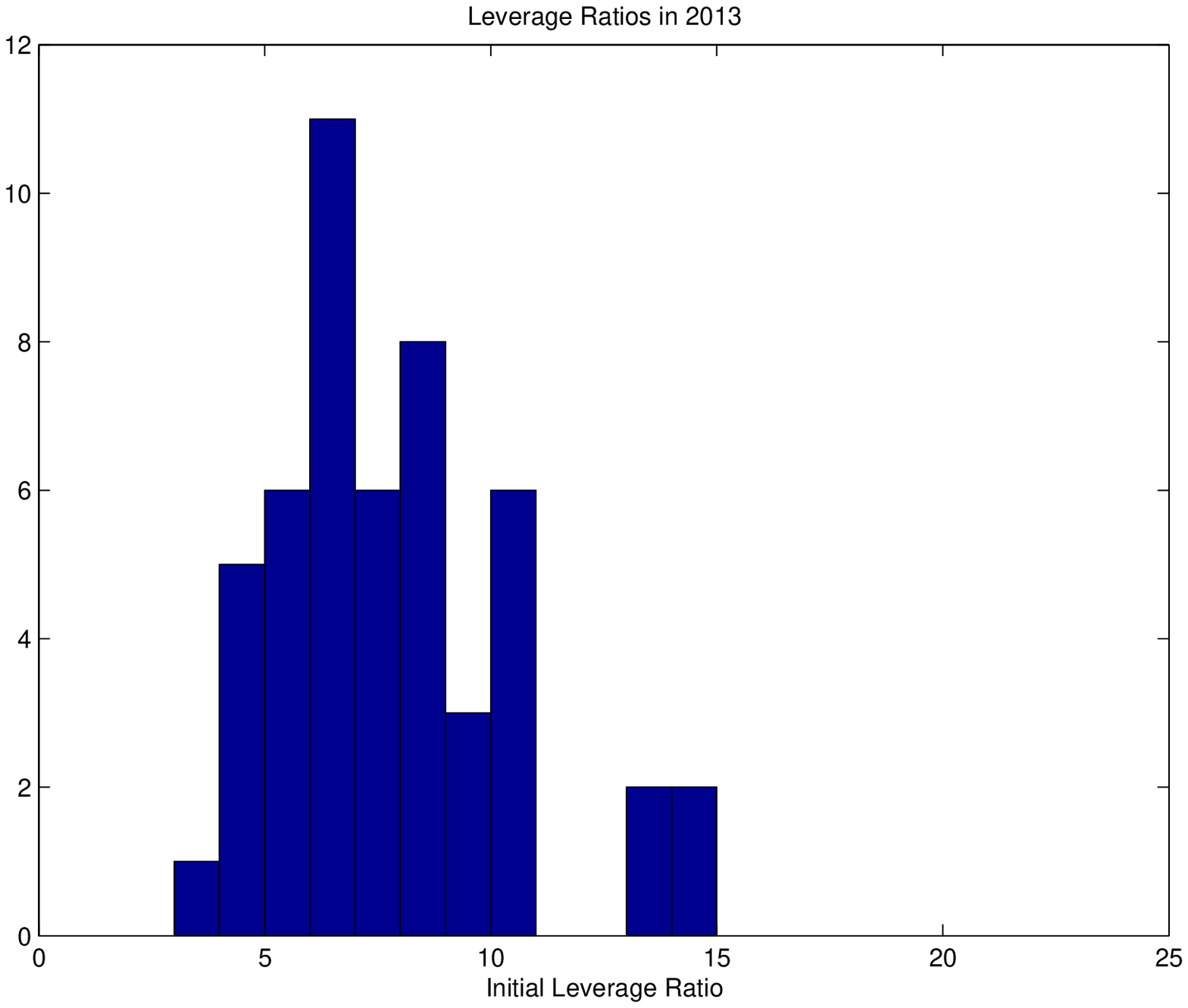}
\caption{A histogram of the leverage ratios from the initial data set on US banks in 2013.}
\label{Fig:hist_2013}
\end{figure}

\begin{figure}
\centering
\includegraphics[height=0.4\textheight,width=0.8\textwidth]{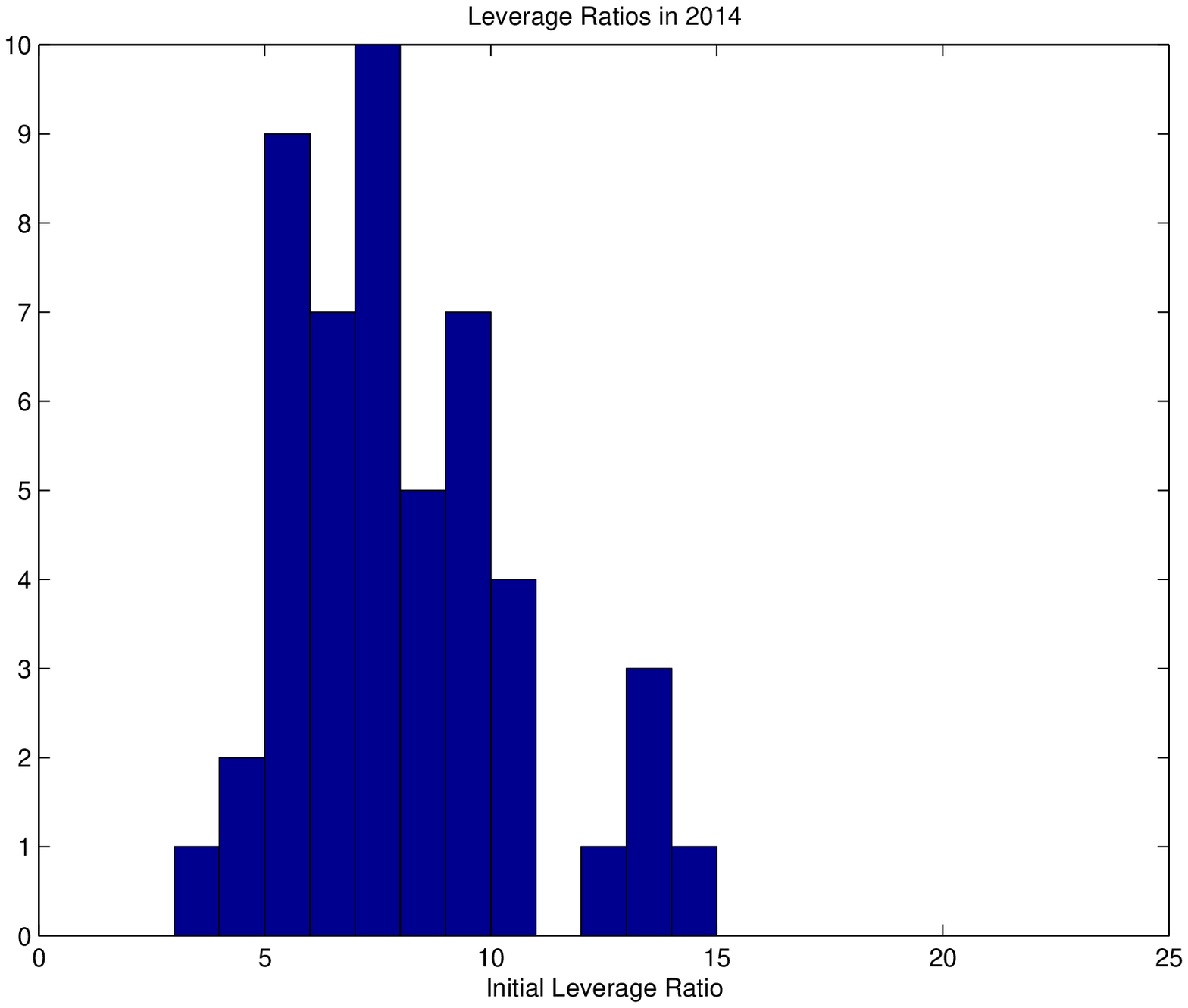}
\caption{A histogram of the leverage ratios from the initial data set on US banks in 2014.}
\label{Fig:hist_2014}
\end{figure}

\begin{figure}
\centering
\includegraphics[height=0.4\textheight,width=0.8\textwidth]{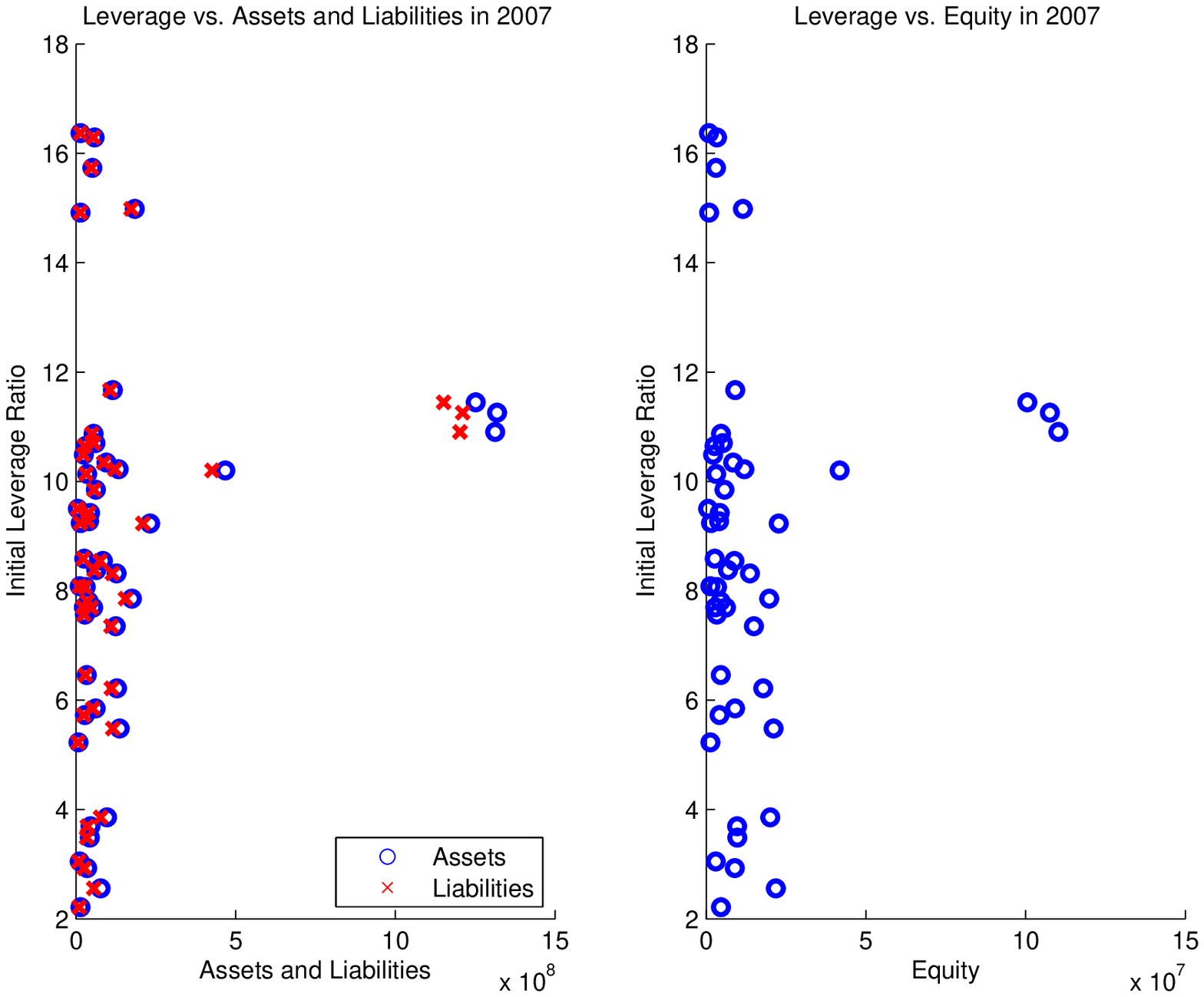}
\caption{A plot of leverage ratios as a function of assets, liabilities, and equity in 2007.}
\label{Fig:scatter_2007}
\end{figure}

\begin{figure}
\centering
\includegraphics[height=0.4\textheight,width=0.8\textwidth]{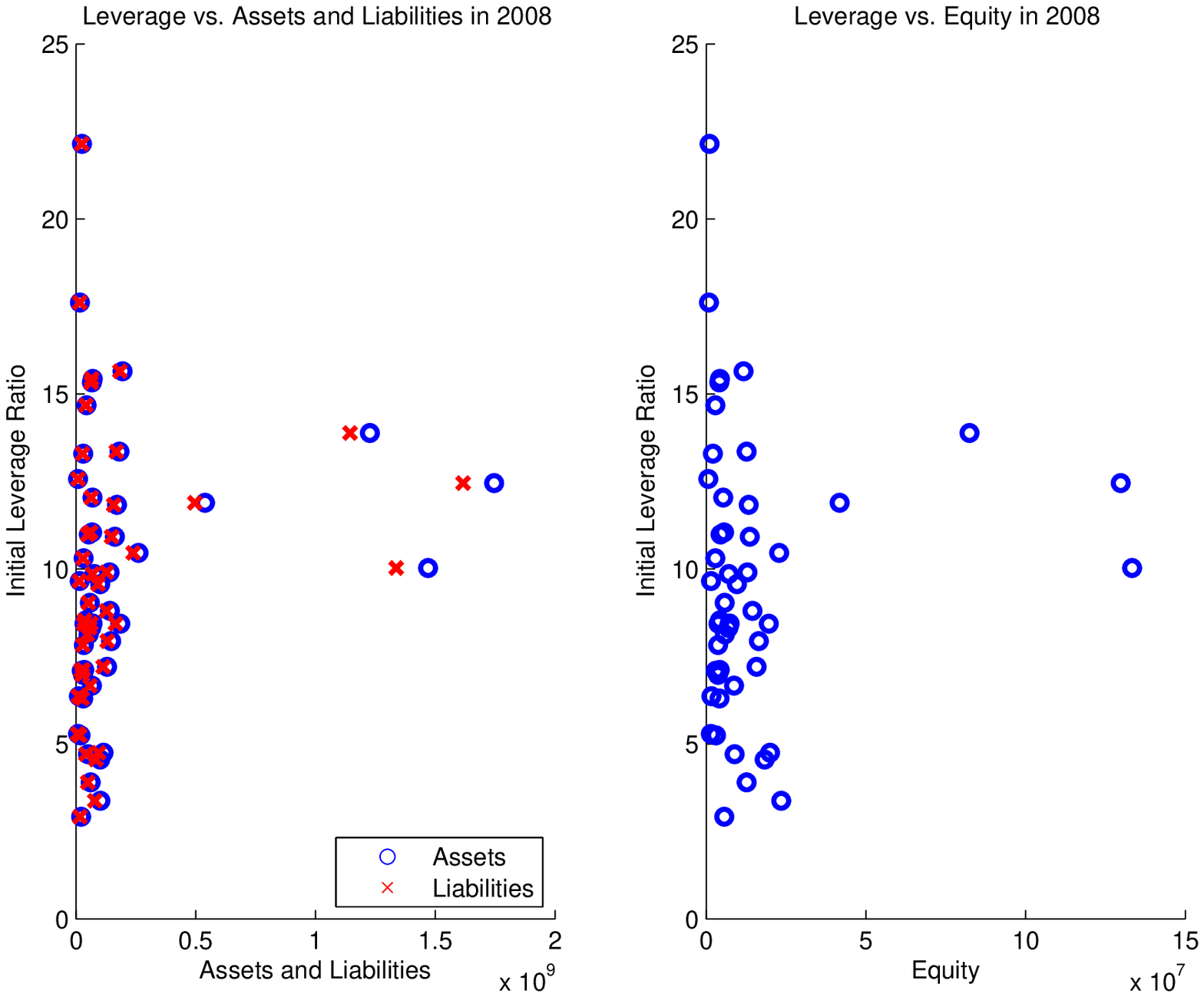}
\caption{A plot of leverage ratios as a function of assets, liabilities, and equity in 2008.}
\label{Fig:scatter_2008}
\end{figure}

\begin{figure}
\centering
\includegraphics[height=0.4\textheight,width=0.8\textwidth]{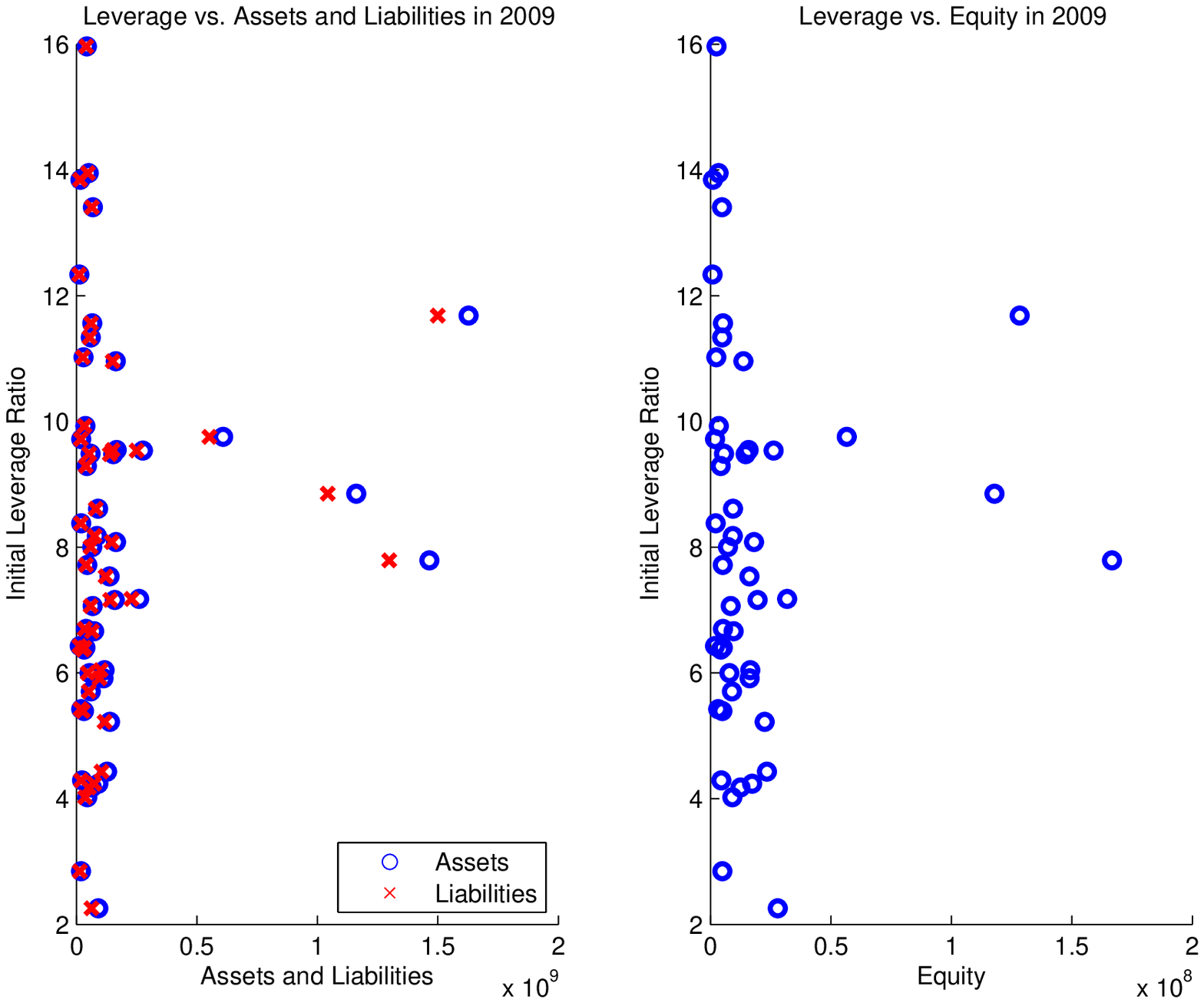}
\caption{A plot of leverage ratios as a function of assets, liabilities, and equity in 2009.}
\label{Fig:scatter_2009}
\end{figure}

\begin{figure}
\centering
\includegraphics[height=0.4\textheight,width=0.8\textwidth]{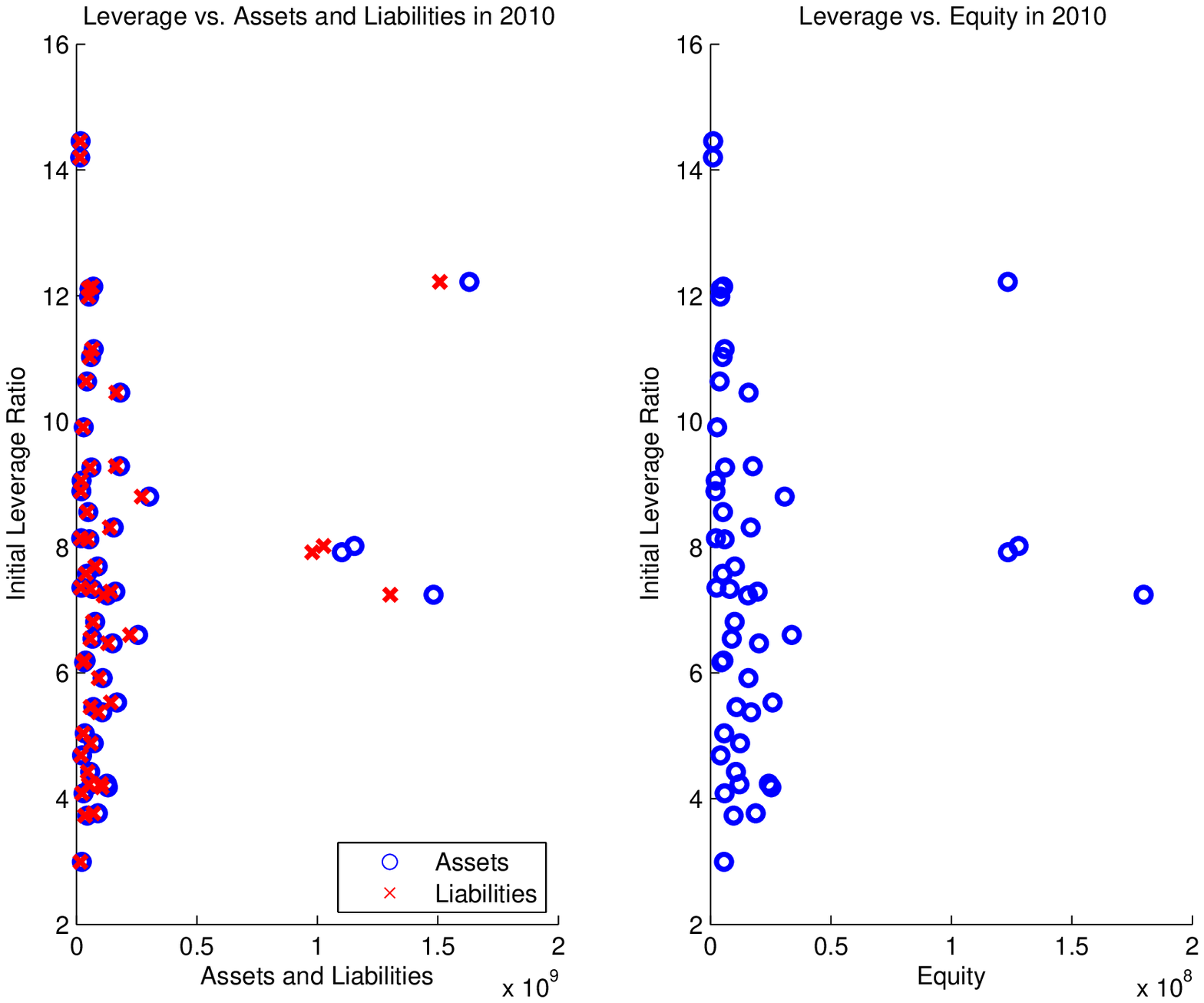}
\caption{A plot of leverage ratios as a function of assets, liabilities, and equity in 2010.}
\label{Fig:scatter_2010}
\end{figure}

\begin{figure}
\centering
\includegraphics[height=0.4\textheight,width=0.8\textwidth]{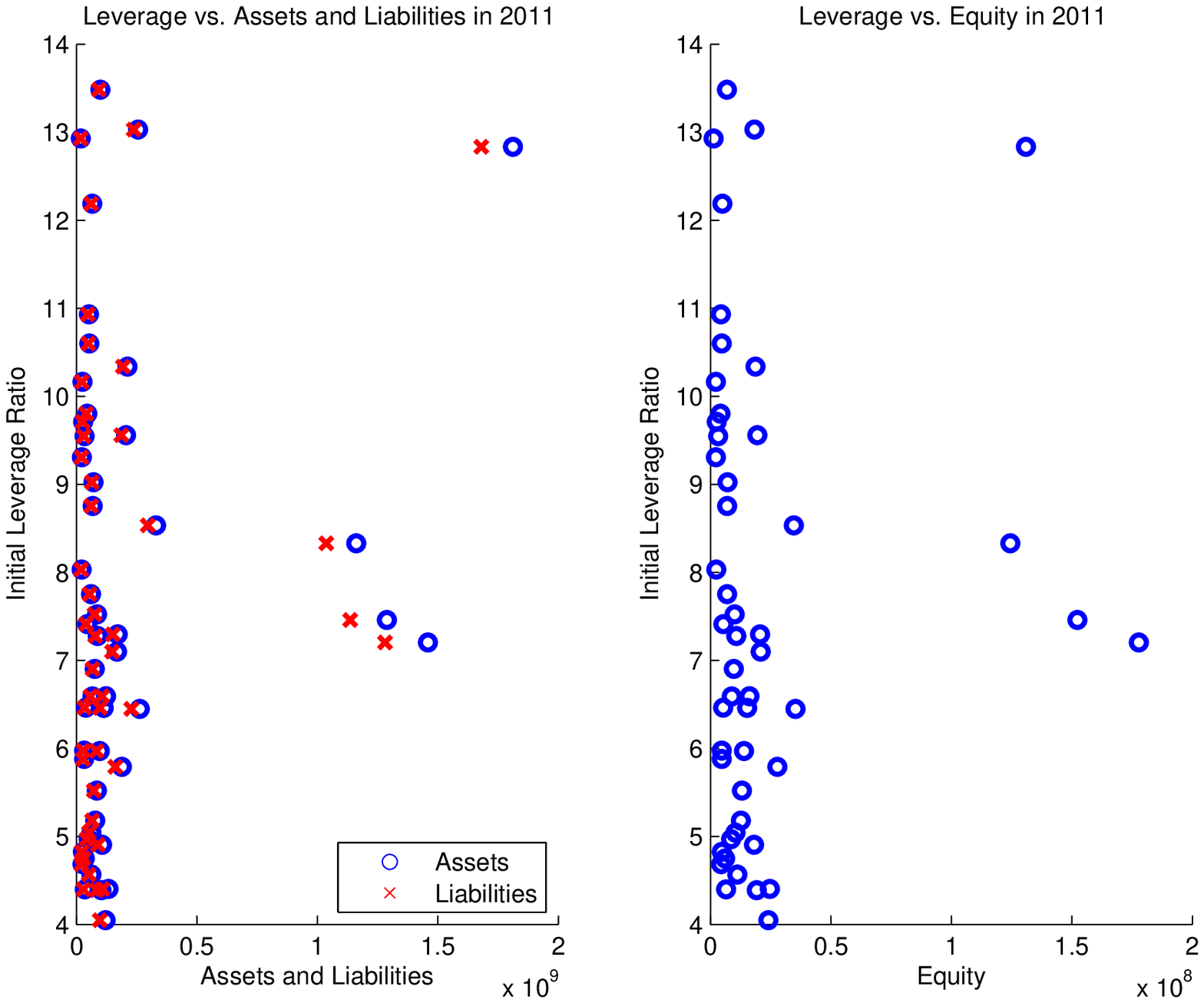}
\caption{A plot of leverage ratios as a function of assets, liabilities, and equity in 2011.}
\label{Fig:scatter_2011}
\end{figure}

\begin{figure}
\centering
\includegraphics[height=0.4\textheight,width=0.8\textwidth]{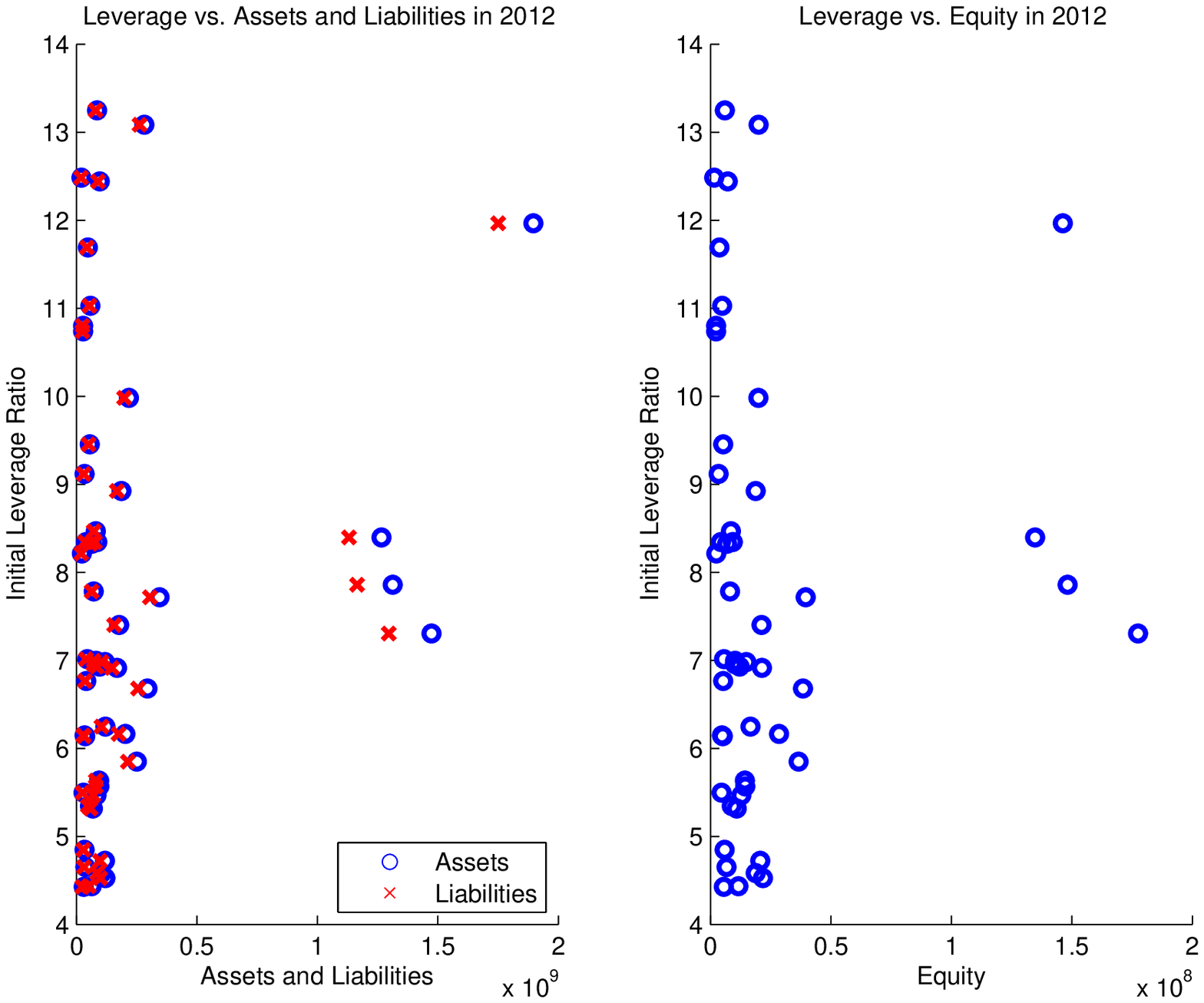}
\caption{A plot of leverage ratios as a function of assets, liabilities, and equity in 2012.}
\label{Fig:scatter_2012}
\end{figure}

\begin{figure}
\centering
\includegraphics[height=0.4\textheight,width=0.8\textwidth]{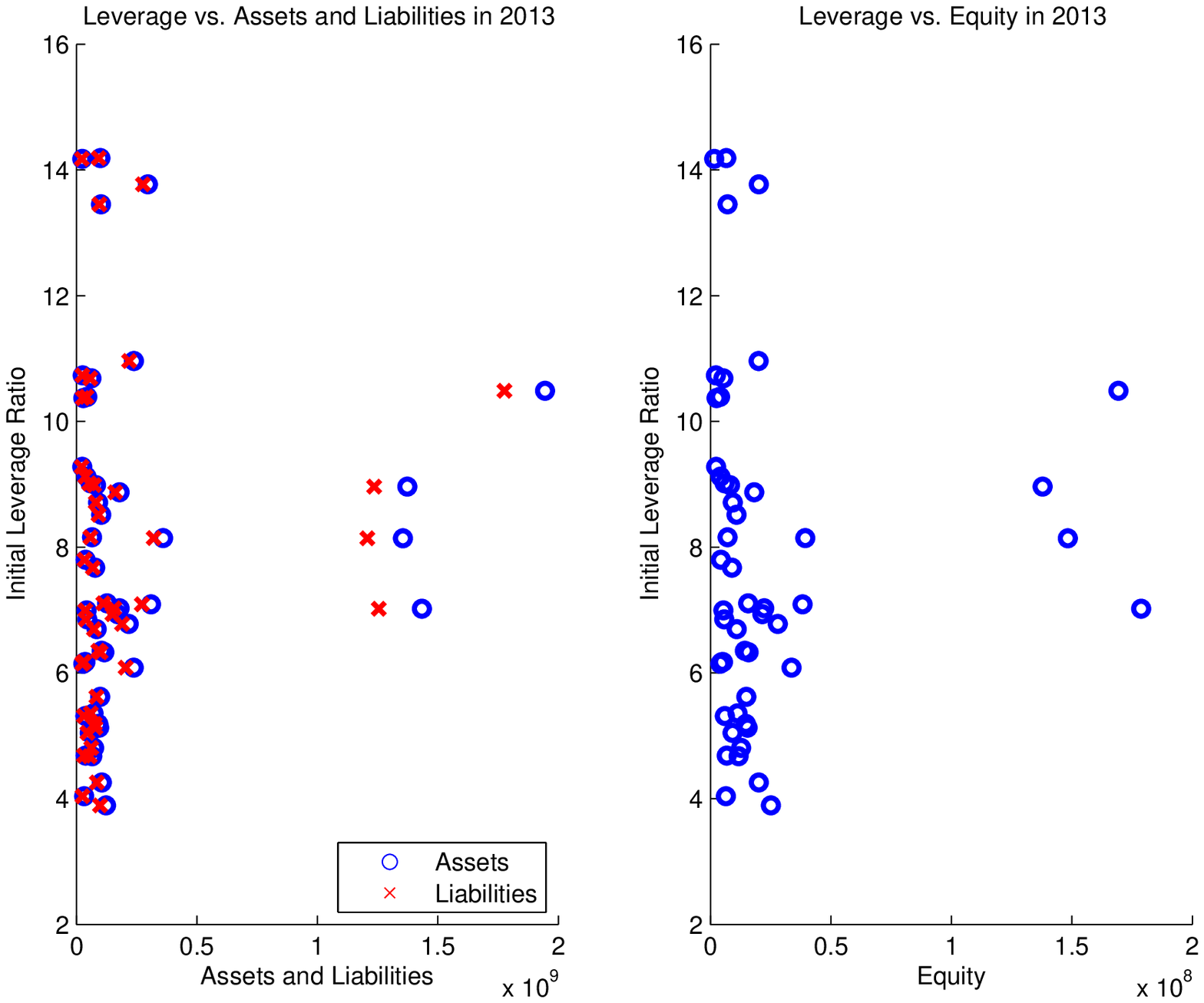}
\caption{A plot of leverage ratios as a function of assets, liabilities, and equity in 2013.}
\label{Fig:scatter_2013}
\end{figure}

\begin{figure}
\centering
\includegraphics[height=0.4\textheight,width=0.8\textwidth]{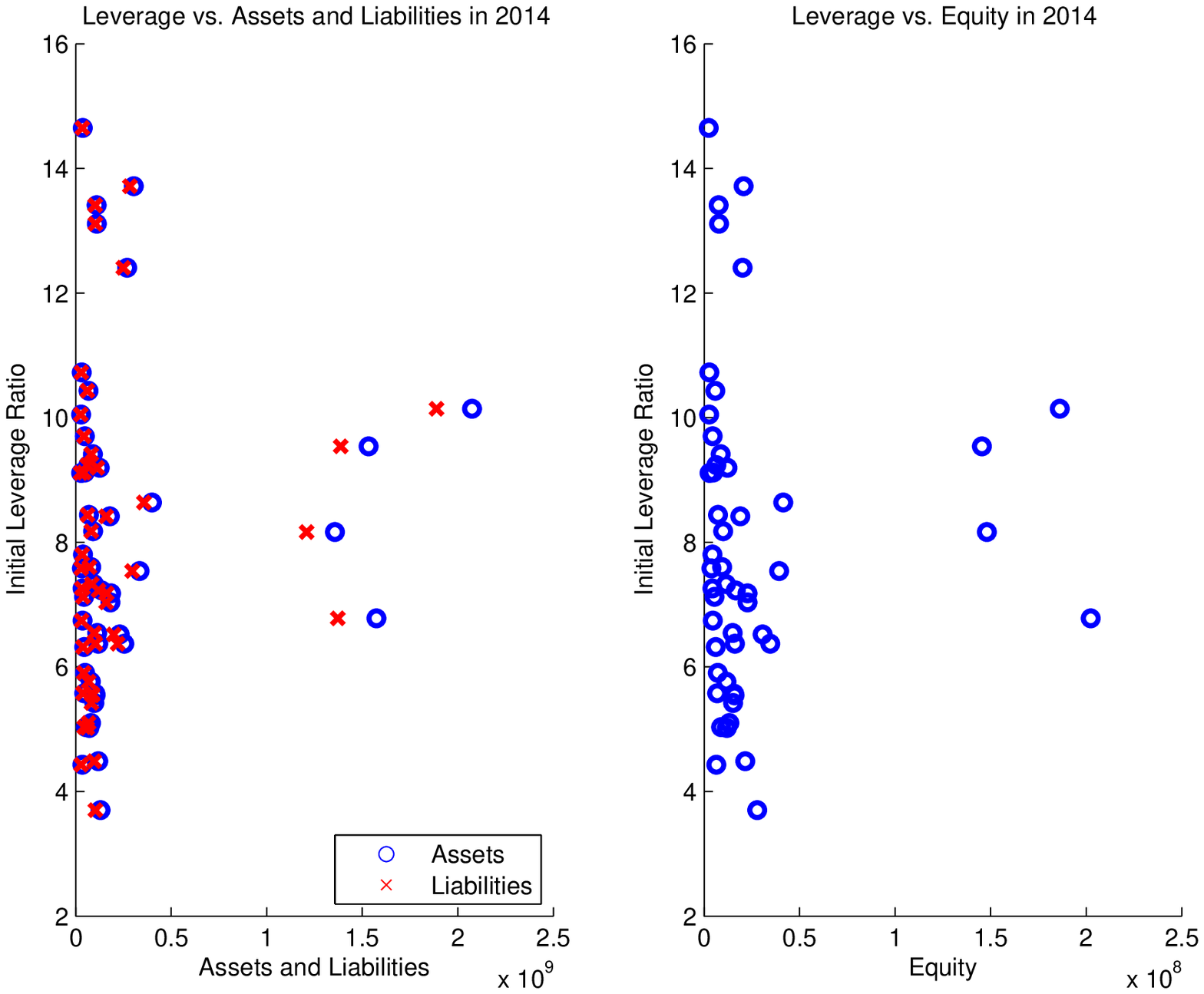}
\caption{A plot of leverage ratios as a function of assets, liabilities, and equity in 2014.}
\label{Fig:scatter_2014}
\end{figure}

\end{document}